\newtheorem{theorem}{Theorem}
\newtheorem{condition}[theorem]{Condition}
\newtheorem{conjecture}[theorem]{Conjecture}
\newtheorem{corollary}[theorem]{Corollary}
\newtheorem{definition}[theorem]{Definition}
\begin{document}

\title{Diameter Perfect Lee Codes}
\author{ Peter Horak, Bader F. AlBdaiwi }
\maketitle

\begin{abstract}
Lee codes have been intensively studied for more than 40 years. Interest in
these codes has been triggered by the Golomb-Welch conjecture on the
existence of the perfect error-correcting Lee codes. In this paper we deal
with the existence and enumeration of diameter perfect Lee codes. As main
results we determine all $q$ for which there exists a linear diameter-$4$
perfect Lee code of word length $n$ over $Z_{q},$ and prove that for each $%
n\geq 3$ there are uncountable many diameter-$4$ perfect Lee codes of word
length $n$ over $Z.$ This is in a strict contrast with perfect
error-correcting Lee codes of word length $n$ over $Z\,$\ as there is a
unique such code for $n=3,$ and its is conjectured that this is always the
case when $2n+1$ is a prime. We produce diameter perfect Lee codes by an
algebraic construction that is based on a group homomorphism. This will
allow us to design an efficient algorithm for their decoding. We hope that
this construction will turn out to be useful far beyond the scope of this
paper.
\end{abstract}


\let\thefootnote\relax\footnotetext{\newline
\textit{{\ Peter Horak is with the Interdisciplinary Arts and Sciences,
University of Washington, Tacoma, 1900 Commerce St., Tacoma, WA 98402-3100,
USA (email: horak@uw.edu).\newline
Bader F. AlBdaiwi is with the Computer Science Department, Kuwait
University, P. O. Box 5969, Safat, 13060, Kuwait (email:
bader.albdaiwi@ku.edu.kw).\newline
} }}

\begin{keywords}
Diameter codes, error-correcting codes, Lee metric, perfect codes.
\end{keywords}

\section{\protect\bigskip Introduction}

\noindent The most common metric in coding theory is the Hamming metric. In
this paper we deal with another frequently used metric, so called Lee
metric. This metric was introduced in \cite{Lee} and \cite{Ulrich} for
transmission of signals. The main motive power for studying codes in Lee
metric goes back to the seminal paper of Golomb and Welch \cite{GW}. In this
paper we will study the existence and will enumerate diameter perfect Lee
codes.\bigskip

\noindent First we recall some definitions and notation. As usual, let $Z$
be the set of integers, $Z_{q}$ denote the integers modulo $q,$ and let $%
T^{n}$ be the $n$-fold Cartesian product of a set $T.$ Throughout the paper
we will use $Z_{q}^{n}$ and $Z^{n}$ for the Abelian (component-wise)
additive group on $Z_{q}^{n}$ and $Z^{n}$ as well. It will always be clear
from the context whether we have in mind a set or the group on this set.\
Because of the coding theory context the elements of $Z_{q}^{n}$ and $Z^{n}$
will also be called words. The Lee distance (=the Manhattan distance, the
zig-zag distance) $\rho _{L}(v,w)$ of two words $v=(v_{1},v_{2},...,v_{n}),$ 
$w=(w_{1},...,w_{n})$ is given by $\rho _{L}(v,w)=\sum\limits_{i=1}^{n}\min
(\left\vert v_{i}-w_{i}\right\vert ,q-\left\vert v_{i}-w_{i}\right\vert )$
for $u,v\in \mathbb{Z}_{q}^{n},$ and by $\rho _{L}(v,w)=$ $%
\sum\limits_{i=1}^{n}\left\vert v_{i}-w_{i}\right\vert $ for $v,w\in Z^{n}.$
By $S_{n,r}(v)$ we denote the Lee sphere of radius $r$ centered at $v;$ that
is, $S_{n,r}(v)=\{w;\rho _{L}(v,w)\leq r\}\,$. {\ For }$v,w$ with $\rho
_{L}(v,w)=1,$ {the double Lee sphere $DS_{n,r}(v,w)$ of radius }$r$ with its
center comprising $v$,$w$ {is the set $S_{n,r}(v)\cup S_{n,r}(w)\,$}.\bigskip

\noindent Perfect error-correcting Lee codes can be introduced in several
ways. For the purpose of this paper it is convenient to define them through
tilings. In order not to have to formulate the same statement/definition
twice, once for $Z^{n}$ and then for $Z_{q}^{n},$ the symbol $Z_{\ast }^{n}$ 
$\ $stands for both of them; that is, by $Z_{\ast }^{n}$ we mean $Z^{n}$ or $%
Z_{q}^{n}.$\ We set $O=(0,...,0)$ and $e_{i}=(0,...,0,1,0,...,0),$ where the 
$i$-th coordinate equals $1.$\bigskip

\noindent Let $V$ be a subset of $Z_{\ast }^{n}$. \ By a copy of $V$ we mean
an image of $V$ under a linear distance preserving bijection on $Z_{\ast
}^{n}.$ Translations of $V$ form a special family of copies of $V.$ We
recall that by a translation of $V$ we understand the set $V+x=\{w;w=x+v,$ $%
v\in V\}$ where $x\in Z_{\ast }^{n}$. As the sphere $S_{n,r}$ is symmetric,
each copy of the sphere is its translation. Each copy of the double-sphere $%
DS_{n,r}$ can be obtained by a translation of the double-sphere $%
DS_{n,r}(O,e_{i})$ for some $i,1\leq i\leq n.$ A collection $\mathcal{T=\{}%
V_{i};i\in I\}$ of copies of $V$ constitutes a tiling of $Z_{\ast }^{n}$ by $%
V$ if $\bigcup\limits_{i\in I}V_{i}=Z_{\ast }^{n}$ and $V_{i}\cap
V_{j}=\emptyset $ for all $i\neq j.$ A tiling $\mathcal{T}$ of $Z_{\ast
}^{n} $ by $V$ that consists of translations of $V$ can be described as $%
\mathcal{T=\{}V+l;l\in \mathcal{L}\}$ where $\mathcal{L\subset }Z_{\ast }^{n}
$. $\ $A tiling $\mathcal{T}$ by translations of $V$ is called periodic if $%
\mathcal{L}$ is periodic, and $\mathcal{T}$ is called a lattice tiling if $%
\mathcal{L}$ is a lattice (=a subgroup) of $Z_{\ast }^{n}$.\bigskip

\noindent Let $V=S_{n,r}$ in $Z_{\ast }^{n}$ such that, if $Z_{\ast
}^{n}=Z_{q}^{n}$ then $q\geq 2r+1.$ A set $\mathcal{L}\subset Z_{\ast }^{n}$
is called a perfect $r$-error-correcting Lee code if $\mathcal{T=\{}V+l;l\in 
\mathcal{L\}}$ is a tiling of $Z_{\ast }^{n}.$ If $\mathcal{L}$ is a lattice
then $\mathcal{L}$ is called a linear code. For $Z_{\ast }^{n}=Z^{n}$ the
perfect $r$-error-correcting Lee code is denoted by $PL(n,r)$ while for $%
Z_{\ast }^{n}=Z_{q}^{n}$ it is denoted by $PL(n,r,q).$ The long-standing
Golomb-Welch conjecture deals with the existence of $PL(n,r)$ codes.
Although there is a vast literature on the subject the conjecture is still
far from being solved.\bigskip

\noindent In this paper we focus on diameter-$d$ perfect Lee codes, which
constitute a generalization of perfect error-correcting Lee codes. Ahlswede
et al., see \cite{A1}, introduced diameter perfect codes for distance
regular graphs. ~Let $(M,\rho )$ be a metric space. Then a set $C\subset M$
is a diameter-$d$ code if $\rho (u,v)\geq d$ for any $u,v\in C,$ and a set $%
A\subset M$ is an anticode of diameter $d$ if $\rho (u,v)\leq d$ for all $%
u,v\in A.$ Each graph $G$ can be seen as a metric space. It is proved in 
\cite{A1} that if $G$ is a distance regular graph, $C$ is a diameter-$d$
code in $G,$ and $A$ is an anticode of diameter $d-1$ of the maximum size in 
$G$ then $\left\vert C\right\vert \left\vert A\right\vert \leq \left\vert
G\right\vert ;$ that is, the sphere packing bound applies to distance
regular graphs as well. A diameter-$d$ code that attains the bound in the
above inequality is called diameter-$d$ perfect code. This definition can be
used for example in Hamming scheme as the related graph is distance regular.
Etzion \cite{E} showed that the sphere packing bound applies to $Z_{q}^{n}$
endowed with Lee metric as well although the related graph is not distance
regular; analogically to \cite{A1} he defined the diameter-$d$ perfect Lee
code as a diameter-$d$ code that attains the equality in the sphere packing
bound. Clearly, a definition through the sphere packing bound cannot be
extended to an infinite space. To deal with the case of $Z^{n}$ Etzion
proved that if $\mathcal{L}$ is a lattice in $Z^{n}$ with the minimum
distance $d$ and a volume $\left\vert V(\mathcal{L)}\right\vert $ then $%
\left\vert A\right\vert \leq \left\vert V(\mathcal{L)}\right\vert $ for any
anticode of diameter $d-1.$ Then he called $\mathcal{L}$ to be the diameter
perfect code if $\mathcal{L}$ attains the equality in the bound. As far as
we know there is no formal definition of the diameter perfect Lee code in $%
Z^{n}$ in case when $\mathcal{L}$ is not a lattice. \bigskip

\noindent Now we provide a unified definition of a diameter-$d$ perfect Lee
code for both spaces $Z_{q}^{n}$ and $Z^{n}$ regardless whether the code is
a linear one. Let $\mathcal{S}=\{S_{i};i\in I\}$ be a family of subsets of
an underlying set $M.$ Then a set $T\subset M$ is called a transversal of $%
\mathcal{S}$ if there is a bijection $f:I\rightarrow T$ so that $f(i)\in
S_{i}.$ Equivalently, $T$ is a transversal of $\mathcal{S}$ if $\left\vert
T\cap S_{i}\right\vert =1$ for each $i\in I,$ and $T\cap S_{i}\neq T\cap
S_{j}$ for all $i\neq j\in I.$\bigskip

\begin{definition}
Let $\mathcal{L}\subset Z_{\ast }^{n}$. Then $\mathcal{L}$ is a diameter-$d$
perfect Lee code in $Z_{\ast }^{n}$ if $\mathcal{L}$ is a diameter-$d$ code,
and there is a tiling $\mathcal{T=}\{W_{i};i\mathcal{\in }I\mathcal{\}}$ of $%
Z_{\ast }^{n}$ by the anticode of diameter $d-1$ of maximum size such that $%
\mathcal{L}$ is a transversal of $\mathcal{T}.$ The diameter-$d$ perfect Lee
code in $Z^{n}$ will be denoted by $DPL(n,d),$ and in $Z_{q}^{n}$ by $%
DPL(n,d,q)$.\bigskip
\end{definition}

\noindent In other words, a diameter-$d$ code $\mathcal{L}$ is perfect if
there is a tiling $\mathcal{T}$ by anticodes of diameter $d-1$ of the
maximum size so that each tile in $\mathcal{T}$ contains exactly one
codeword of $\mathcal{L}$. It is not difficult to see that for $Z_{q}^{n}$
and for lattice case in $Z^{n}$ our definition of diameter perfect Lee codes
is equivalent to the definition due to Etzion.\bigskip

\noindent Any error-correcting perfect Lee code is also a diameter perfect
Lee code. Indeed, it is easy to see that, for $d$ even, the anticode of
diameter $d\,$ of the maximum size is the Lee sphere $S_{n,r}$ with $r=\frac{%
d}{2}.$ Thus, for $d$ odd, $DPL(n,d)$ and $DPL(n,d,q)$ codes are $PL(n,r)$
and $PL(n,r,q)$ codes where $r=\frac{d-1}{2}$, respectively. It was proved
in \cite{A} that, for $d$ odd, the anticode of diameter $d$ of maximum size
is the double-sphere $DS_{n,r}$ with $r=\frac{d-1}{2}.$ \bigskip

\noindent We point out an advantage of our definition. Suppose that we need
to show the existence of a diameter-$d$ perfect Lee code in $Z_{\ast
}^{n}\,\ $where $d$ is even. Then all we need to do is to prove that there
is a tiling $\mathcal{T}=\{W_{i};i\in I\}$ of $Z_{\ast }^{n}$ by the
double-sphere $DS_{n,r}$ of radius $r=\frac{d-2}{2}$.\ Indeed, choose a
fixed tile $W_{i_{0}}$ in $\mathcal{T}.$ Let $x$ be one of the two words
forming the center of $W_{i_{0}.\text{ }}$We will prove that the set $N%
\mathcal{=\{}y;y$ belongs to the center of $W_{i}$ and $\rho _{L}(x,y)$ is
even, $i\in I\}$ is a distance-$d$ perfect Lee code. Clearly, $N$ is well
defined as the center of $W_{i}$ consists of two words at distance $1;$ that
is, one of them at even and the other at odd distance from $x.$ $N$ is a
transversal of $\mathcal{T},$ thus we need only to prove that $\rho
_{L}(x,y)\geq d$ for any $x,y\in N$. Clearly, the distance of any two words
in $N$ is even. This is obvious to see for $Z_{\ast }^{n}=Z^{n}.\,\ $\ It is
not true in general for $Z_{\ast }^{n}=Z_{q}^{n},$ it is valid only in the
case when $q$ is even. However, as the size of $DS_{n,r}$ is even (see \cite%
{E}) $|Z_{q}^{n}|=q^{n}$ has to be even as well, which in turn implies $q$
is even. So for the distance of any two words in $N$ we have $\rho
_{L}(x,y)\geq r+r+1;$ and since $\rho _{L}(x,y)$ is even, we get $\rho
_{L}(x,y)\geq 2r+2=d.$ We have proved:\bigskip

\begin{theorem}
\label{X}If there is a tiling of $Z_{\ast }^{n}$ by double-spheres $DS_{n,r}$
then there is a distance-$d$ perfect Lee code in $Z_{\ast }^{n}$ with $%
d=2r+2.\ $Moreover, if $\{W+l;l\in \mathcal{L}\}$ is a tiling of $Z_{\ast
}^{n}$ by translations of $DS_{n,r}$ such that the weight $\left\vert
l\right\vert =\rho _{L}(l,O)$ of $l$ is even for all $l\in \mathcal{L}$ then 
$\mathcal{L}$ is a diameter-$d$ perfect Lee code.\bigskip
\end{theorem}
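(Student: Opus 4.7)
The plan is to follow and formalize the argument sketched in the paragraph preceding the theorem, and then to observe that the even-weight hypothesis in the moreover part forces the transversal constructed there to coincide with $\mathcal{L}$ itself.

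For the first assertion, given a tiling $\mathcal{T} = \{W_i; i \in I\}$ of $Z_*^n$ by double-spheres $DS_{n,r}$, I would fix one tile $W_{i_0}$, pick a word $x$ in its center, and let $N$ comprise the unique center word of each $W_i$ whose Lee distance from $x$ is even. Such a word exists and is unique because the two center words of $W_i$ lie at Lee distance $1$ and hence at opposite parities from $x$; this immediately makes $N$ a transversal of $\mathcal{T}$. The only real work is showing $\rho_L(y,y') \geq d = 2r+2$ for distinct $y,y' \in N$. I would introduce the parity $p(v) = \sum v_i \bmod 2$ and verify the identity $\rho_L(v,w) \equiv p(v) + p(w) \pmod{2}$; this is immediate in $Z^n$, and holds in $Z_q^n$ because, as already observed in the excerpt, $|DS_{n,r}|$ being even forces $q$ even. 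Then $p(y) = p(x)$ for every $y \in N$, so $\rho_L(y,y')$ is even. Since $y$ is itself a center of its tile, $S_{n,r}(y) \subseteq W_i$, and disjointness of distinct tiles yields $S_{n,r}(y) \cap S_{n,r}(y') = \emptyset$, whence $\rho_L(y,y') \geq 2r+1$. Combined with even parity this improves to $\rho_L(y,y') \geq 2r+2$. Together with the fact cited from \cite{A} that $DS_{n,r}$ is a maximum diameter-$(2r+1)$ anticode, this shows that $N$ is a diameter-$d$ perfect Lee code.

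For the moreover part, take $\mathcal{L}$ as given and pick any $l_0 \in \mathcal{L}$ to play the role of $x$; writing the fixed double-sphere as $W = DS_{n,r}(O, e_{i_0})$, the center of $W+l$ is $\{l, l + e_{i_0}\}$. Since $|l_0|$ and $|l|$ are both even, $p(l_0) = p(l) = 0$, so the parity identity gives $\rho_L(l_0, l)$ even while $\rho_L(l_0, l + e_{i_0})$ is odd. Hence the center word selected into $N$ from the tile $W+l$ is exactly $l$, giving $N = \mathcal{L}$, and the first part finishes the proof.

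I expect the only real subtlety to be the parity bookkeeping in $Z_q^n$, where $\rho_L$ uses $\min(|v_i|, q-|v_i|)$ rather than $|v_i|$: the congruence $\min(|v_i|, q-|v_i|) \equiv |v_i| \pmod{2}$ holds precisely when $q$ is even, which is exactly what the divisibility of $q^n$ by the even number $|DS_{n,r}|$ supplies at no extra cost. Once this is in place, both halves of the statement are short.
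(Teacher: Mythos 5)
Your proof is correct and follows essentially the same route as the paper's: the paper's argument (given in the paragraph preceding the theorem) likewise fixes a tile, selects the even-distance center word of each tile to form the transversal $N$, deduces $q$ even from $\left\vert DS_{n,r}\right\vert \mid q^{n}$, and upgrades the sphere-disjointness bound $2r+1$ to $2r+2$ by parity. You merely make explicit a few steps the paper leaves implicit -- the congruence $\rho _{L}(v,w)\equiv p(v)+p(w)\pmod 2$, the justification of $\rho _{L}(y,y^{\prime })\geq 2r+1$ via $S_{n,r}(y)\subseteq W_{i}$, and the observation that the even-weight hypothesis forces $N=\mathcal{L}$ in the moreover part -- all of which are accurate.
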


\noindent The main goal of this paper is to study the existence and to
enumerate diameter perfect Lee codes. It turns out that some results on
diameter perfect Lee codes and on perfect error-correcting Lee codes are
similar but in some cases the results on the two types of codes are very
much different. E.g., it was proved in \cite{HB} that there is only one, up
to isomorphism, tiling of $Z^{3}$ by Lee spheres $S_{3,1}$ but we show in
this paper that there are uncountable many tilings of $Z^{3}$ by double Lee
sphere $DS_{3,1}$. The rest of the paper is organized as follows. \
Golomb-Welch conjecture claims that there are no $PL(n,r)$ codes for $n\geq
3,$ and $r>1.$ The current state of this conjecture as well as its extension
to $DPL(n,d)$ codes and a further extension to perfect dominating sets is
provided in Section 2. We defined perfect Lee codes by means of tilings. In 
\cite{S1} S. Stein introduced an algebraic construction of lattice tilings 
based on group homomorphisms. He used this construction for tilings  by
crosses. This construction, and its variations occur in many papers, see
e.g. \cite{ADH, S3, HS, HB, Molnar, Sw,  S2,  S, Sz}. In Section 3 a
generalization of Stein's construction is given. Let $V$ be a subset of $%
Z^{n}.$ We state a necessary and sufficient condition for the existence of a
lattice tiling of $Z^{n}$ by $V$ in terms of a group homomorphism. We guess
that this construction will turn useful outside the scope of this
paper.\bigskip 

\noindent The main results of this paper are stated in Section 4. It is
believed, see Section 2, that, with only one exception of $DPL(3,6)$ code,
there are no $DPL(n,d)$ codes, and also no $DPL(n,d,q)$ codes for $n\geq 3,$
and $d>4.$ All values of $q$ for which there exists a linear $PL(n,1,q)$
code (=a linear $DPL(n,3,q)$ code) were determined in \cite{H}. We prove an
analogous result for linear $DPL(n,4,q)$ codes. As to the enumeration of all 
$DPL(n,4)$ codes, Etzion \cite{E} proved that there are uncountable many of
them if $n$ is a power of $2.$ We show that this is also the case for the
other values of $n\geq 3.$ The situation with $PL(n,1)$ codes is very much
different. It was proved in \cite{HB} that there are uncountable many $%
PL(n,1)$ codes if $2n+1$ is not a prime. However, on the other hand it was
showed there that there is only one, up to isomorphism, $PL(3,1)$ code, and
the authors conjectured that the same is true for each $n$ where $2n+1$ is a
prime. We note that the existence of uncountable many $DPL(n,4)$ codes
implies that there are uncountable many non-periodic $DPL(n,4)$ codes as the
total number of periodic $DPL(n,4)$ codes is at most $\aleph _{0}.$
Non-periodic $DPL(n,4)$ codes were constructed by means of non-regular
lattice tilings (i.e. tilings that are not a face-to-face tilings) of $R^{n}$
by a cluster of cubes centered at words of a double-sphere $DS_{n,1}.$ This
is another example of a well know situation when to solve a problem we need
to generalize it first. We note that results in Section 4 answer in the
affirmative questions (4), (5), and (8) raised by Etzion in \cite{E}.
Question (6) from \cite{E} was answered in \cite{HB}. \ Finally, in Section
5 an efficient decoding algorithm for linear perfect Lee codes, both
diameter and error-correcting, is designed. Thanks to the representation of
these codes by tilings constructed via a group homomorphism, the computation
complexity of this algorithm is $\Theta (n)$ for $DPL(n,d),d=3,4,$ codes,
and its complexity is $O(\log d)$ for $DPL(2,d)$ codes. We recall that $%
DPL(n,d)$ codes are believed not to exists for $n\geq 3$ and $d>4.$

\section{\protect\bigskip Golomb-Welch conjecture and its extensions}

\noindent In this section we provide a short account of the current state of
Golomb-Welch conjecture and some of its extensions. In \ \cite{GW} it is
conjectured that:\bigskip

\begin{conjecture}
There is no $PL(n,r)$ code for $n\geq 3$ and $r>1.$
\end{conjecture}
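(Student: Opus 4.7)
The plan is to attack the conjecture in two stages, corresponding to the lattice and non-lattice cases, since the tools available in each are very different.

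For the lattice case, I would exploit the group-homomorphism construction previewed in Section 3. A lattice $PL(n,r)$ code corresponds to a finite abelian group $G$ with $|G|=|S_{n,r}|$ together with a homomorphism $\phi:Z^{n}\rightarrow G$ whose restriction to $S_{n,r}$ is a bijection; writing $g_{i}=\phi(e_{i})$, this amounts to requiring that the multiset $\{\sum_{i=1}^{n}a_{i}g_{i}:\sum|a_{i}|\leq r\}$ enumerates $G$ exactly once. I would then apply characters of $G$: for every nontrivial character $\chi$ the exponential sum $\sum_{v\in S_{n,r}}\chi(\phi(v))$ must vanish. Playing these vanishing conditions against the combinatorics of the $\ell_{1}$-ball and the explicit formula for $|S_{n,r}|$ is the natural route to a divisibility or parity contradiction for $n\geq 3$ and $r\geq 2$.

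For the non-lattice case, the plan is to analyse local structure. Fix a codeword $c$ and consider the apex $c+re_{1}$ of its sphere. The unique tile covering $c+(r+1)e_{1}$ must be centred at a point at Lee distance exactly $2r+1$ from $c$; iterating this argument in every extremal direction forces a highly constrained shell of tiles around $S_{n,r}(c)$. I would then examine the ``corner'' points where two such shells meet and try to derive either a double cover or an uncovered point. Variations of this strategy already settle low dimensions ($n=3,4,5$) and the regime where $r$ is large relative to $n$, so it is the natural starting point for the general case.

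The main obstacle is precisely the intermediate regime of moderate $n$ and $r$. No reduction from general tilings to the lattice case is known, so one cannot simply lean on the character argument; conversely, the combinatorial shell around $S_{n,r}(c)$ becomes too rich to enumerate by hand as $r$ grows, since the number of extremal ``corners'' of $S_{n,r}$ scales exponentially in $n$. Any genuine progress likely requires a Fourier- or polynomial-analytic argument that controls a hypothetical tiling without presupposing group structure, and identifying the correct invariant to exploit is what has kept this conjecture open for more than fifty years.
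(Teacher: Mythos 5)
This statement is the Golomb--Welch conjecture, and the paper does not prove it: it is stated as a conjecture precisely because it remains open, and the surrounding text records only the known partial results ($3\leq n\leq 5$ and all $r>1$ by Gravier et al., \v{S}pacapan, and Horak; the pair $(n,r)=(6,2)$; the existence of an unspecified $r_{n}$ beyond which no $PL(n,r)$ code exists; and Post's nonexistence result for periodic codes). So there is no ``paper's own proof'' to compare against, and your proposal must be judged on whether it itself constitutes a proof. It does not.

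Concretely, both halves of your plan stop exactly where the difficulty begins. In the lattice case, the reduction to a homomorphism $\phi:Z^{n}\rightarrow G$ bijective on $S_{n,r}$ is correct (it is Theorem \ref{B} of the paper), and the observation that $\sum_{v\in S_{n,r}}\chi(\phi(v))=0$ for every nontrivial character $\chi$ is a valid necessary condition; but you derive no contradiction from it. The vanishing of these sums is consistent with many choices of $G$ and $\phi$, and no divisibility or parity obstruction is exhibited for any pair $(n,r)$ with $n\geq 3$, $r\geq 2$. In the non-lattice case, the shell analysis around an extremal point of $S_{n,r}(c)$ is indeed the engine behind the $n=3,4,5$ results, but you correctly note that the number of extremal configurations grows exponentially and offer no mechanism to control it; asserting that a ``double cover or an uncovered point'' should emerge is a hope, not an argument. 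Your final paragraph concedes that the intermediate regime is untouched and that the right invariant is unknown. That concession is accurate, but it means the proposal is a research program rather than a proof, and the gap is not a repairable local step --- it is the entire content of the conjecture.
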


\noindent If true then the conjecture is best possible as the existence of $%
PL(n,r)$ codes for $n=2$ and all $r\geq 1\,$, and $n\geq 3$ and $r=1$ was
showed by several authors, see e.g. \cite{GW}. It is also proved in \cite{GW}
that for each $n\geq 3$ there is $r_{n},$ $r_{n}$ not specified, so that
there is no $PL(n,r)$ code with $r\geq r_{n}.$ Although there are plenty of
results in the literature on this conjecture the non-existence of $PL(n,r)$
codes has been proved only in few cases. First, Gravier et al. \cite{Gra}
settled the Golomb-Welch conjecture for $n=3$ and all $r>1.$ Later \v{S}%
pacapan \cite{Sca}, whose proof is computer aided, showed the non-existence
of a $PL(n,r)$ code for $n=4$ and all $r>1$. Horak \cite{Ho} provided an
algebraic proof that there is no $PL(n,r)$ code for $3\leq n\leq 5$ and all $%
r>1.$The only other value of parameters for which the Golomb-Welch
conjecture is known to be true is $n=$ $6$ and $r=2,$ see \cite{Hor}.\bigskip

\noindent The non-existence of $PL(n,r,q)$ codes for some pairs $(n,r)$ and
specific values of $q$ depending on $(n,r)$ was stated in several papers,
see e.g. \cite{AS} for a comprehensive account. The best result of this type
is due to Post \cite{P} who showed that there is \ no $PL(n,r,q)$ code for
any $q\geq 2r+1$ (= code with a large alphabet),$~$\ and $3\leq n\leq
5,r\geq n-2,$ and for $n\geq 6,$ and $r\geq \frac{\sqrt{2}}{2}n-\frac{1}{4}(3%
\sqrt{2}-2).$ Post's result can also be viewed as follows: For the given
pairs of $(n,r)$ there is no periodic $PL(n,r)$ code. In \cite{Hor} and \cite%
{HM} the first author of this paper quoted incorrectly Post's result as he
omitted the word \textit{periodic.\bigskip }

\noindent As mentioned in the Introduction, $DPL(n,d)$ codes are a
generalization of $PL(n,r)$ codes as, for $d$ odd, a $DPL(n,d)$ code is a $%
PL(n,\frac{d-1}{2})$ code as well. Therefore the conjecture stated by Etzion
in \cite{E} is an extension of the Golomb-Welch conjecture.\bigskip

\begin{conjecture}
\cite{E} There is no $DPL(n,d)$ code for $n\geq 3$ and $d>4$ with the
exception of the pair $(n,r)=(3,6).$\bigskip
\end{conjecture}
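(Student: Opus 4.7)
The plan is to separate on the parity of $d$. For $d$ odd, set $r=(d-1)/2$; as noted in the Introduction a $DPL(n,d)$ code is exactly a $PL(n,r)$ code, so this half of the conjecture literally \emph{is} Golomb--Welch with $r\ge 2$. I would therefore simply invoke the partial results of \cite{Gra}, \cite{Sca}, \cite{Ho}, and the $n=6,r=2$ case of \cite{Hor} to dispose of the small cases, and flag that what remains on the odd side is genuinely the Golomb--Welch problem.

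For $d$ even, write $d=2r+2$ with $r\ge 2$. By the argument just before Theorem~\ref{X} it is enough to show that $Z^{n}$ admits \emph{no} tiling by the double-sphere $DS_{n,r}$ whenever $n\ge 3$ and $(n,r)\neq(3,2)$. For lattice tilings I would exploit the algebraic framework of Section~3: such a tiling corresponds to an Abelian group $G$ of order $|DS_{n,r}|$ together with a homomorphism $\phi:Z^{n}\to G$ whose restriction to $DS_{n,r}(O,e_{1})$ is a bijection. Setting $g_{i}=\phi(e_{i})$, the images of the two spheres $S_{n,r}(O)$ and $S_{n,r}(e_{1})$ must each embed injectively into $G$ and together exhaust it, which translates into sharp arithmetic constraints on the orders of $g_{1},\dots,g_{n}$ in $G$. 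Pushing these constraints in the way Post pushed the sphere-packing identities in the $PL(n,r,q)$ setting should rule out all $n\ge 3$, $r\ge 2$ save the sporadic exception $(3,2)$ witnessing $DPL(3,6)$.

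The principal obstacle, by a wide margin, is the non-lattice and non-periodic case. For $Z_{q}^{n}$ the sphere-packing bound kills diameter perfect codes automatically, but an arbitrary tiling of $Z^{n}$ by $DS_{n,r}$ carries no algebraic structure a priori. The natural route is a local-rigidity argument in the spirit of the uniqueness proof for $PL(3,1)$ in \cite{HB}: show that once a single $DS_{n,r}$ tile is placed, the positions of its neighbours are essentially forced within a codimension-one slab, so that any tiling is determined by a bounded-width cross-section and only finitely many cross-sections need be examined. Even granting such rigidity, the conjecture strictly contains Golomb--Welch, so a complete proof lies beyond present techniques; the realistic deliverable is the conjecture for bounded $n$, together with a Post-type density bound in the regime $r$ large compared with $n$.
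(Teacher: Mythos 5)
This statement is a \emph{conjecture} (Etzion's extension of the Golomb--Welch conjecture, quoted from \cite{E}); the paper offers no proof of it and explicitly treats it as open, so there is no proof to compare yours against. Your proposal is best read as a research programme, and you concede as much in your final sentence, which is the honest assessment. Your two reductions are sound: for $d$ odd the statement is literally Golomb--Welch with $r=(d-1)/2\ge 2$, and for $d$ even it reduces, via the definition of diameter perfection and Theorem~\ref{X}, to the non-existence of tilings of $Z^{n}$ by $DS_{n,r}$ for $r\ge 2$, $(n,r)\ne(3,2)$. But neither branch closes.

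The concrete gaps are these. First, on the odd side, the cited results only cover $3\le n\le 6$; for $n\ge 7$ and $r\ge 2$ nothing is known, so that half remains fully open. Second, your Post-style attack on the even lattice case inherits Post's limitations: as the paper is at pains to point out (correcting an earlier misquotation), Post's theorem rules out only \emph{periodic} codes, and only in the regime $r\ge n-2$ (for $3\le n\le 5$) or $r\gtrsim \frac{\sqrt2}{2}n$; the regime of fixed small $r\ge 2$ and large $n$ --- arguably the heart of the conjecture --- is untouched, and it is not clear that the homomorphism constraints of Theorem~\ref{B} yield a contradiction there. Third, and most seriously, your proposed local-rigidity argument for the non-lattice case is contradicted in spirit by the paper's own main theorem: for $r=1$ the double-sphere $DS_{n,1}$ admits $2^{\aleph_{0}}$ distinct tilings of $Z^{n}$, built precisely by sliding whole connectivity components of a non-regular tiling independently along an axis. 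So double-spheres are exactly the tiles for which rigidity fails dramatically, and one cannot expect a bounded-width cross-section analysis in the style of the $PL(3,1)$ uniqueness proof of \cite{HB} to carry over without a genuinely new idea for $r\ge 2$.
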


\noindent Etzion's conjecture, if true, is also best possible. The existence
of $DPL(2,d)$ codes, for $d$ odd was stated in \cite{GW} in terms of $%
PL(n,r) $ codes. The existence of $DPL(n,d)$ codes for $n=2$ and all even $%
d\geq 4,$ and \ all $n\geq 3$ and $d=4$ was established in \cite{E} and \cite%
{ADH}, while the existence of a $DPL(3,6)$ code follows from a Minkowski
tiling \cite{M}.\bigskip

\noindent To be able to present an extension of Etzion's conjecture we state
a generalization of diameter perfect Lee codes introduced in \cite{ADH}. Let 
$G=(V,E)$ be a graph, and $t$ be a natural number. A set $S\subset V$ is
called a $t$-\textit{perfect distance-dominating set} in $G$, or a $t$-%
\textrm{PDDS} in $G$, if, for each $v\in V$, there is a unique component $%
C_{v}$ of $[S]$ so that $d(v,C_{v})\leq t,~$\ and there is in $C_{v}$ a
unique vertex $w$ with $d(v,w)=d(v,C_{v}).$ We recall that the distance $%
d(v,C)$ of a vertex $v\in V$ to $C\subset V$ is given by $d(v,C)=\min
\{d(v,w);w\in C\}$ and that $[S]$ stands for the subgraph of $G$ induced by $%
S.$ To facilitate our discussion, a $t$-\textrm{PDDS} in $G$ all whose
components are isomorphic to a graph $H$ will be denoted by $t$-\textrm{PDDS}%
$[H].$\bigskip

\noindent Let $\Lambda _{n}$ be the integer grid, that is, \ $\Lambda _{n}$
has $Z^{n}$ as its vertex set with two vertices being adjacent if their
distance equals $1$. Let $P_{k}$ stand for a path on $k$ vertices. Thus $%
P_{1}$ is a single vertex and $P_{2}$ is a pair of adjacent vertices. It is
not difficult to see that an $r$-\textrm{PDDS}$[P_{1}]$ in $\Lambda _{n}$
constitutes a $PL(n,r)$ code while an $r$-\textrm{PDDS}$[P_{2}]$\textrm{\ }%
in $\Lambda _{n}$ is a $DPL(n,2r+2)$ code$.$ In \cite{ADH} it is conjectured
that:\bigskip

\begin{conjecture}
\label{G}Let $k\geq 1.$ Then there is no $r$-\textrm{PDDS}$[P_{k}]$ \ in $%
\Lambda _{n}$ for $n\geq 3$ and $r>1$ with the exception of $2$-\textrm{PDDS}%
$[P_{2}]$ in $\Lambda _{3}.$\bigskip
\end{conjecture}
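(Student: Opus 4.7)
The plan is to reformulate Conjecture \ref{G} as a tiling problem and tackle it by cases on the parameter $k$. An $r$-\textrm{PDDS}$[P_k]$ in $\Lambda_n$ is precisely a tiling of $Z^n$ by copies of the \emph{capsule}
\[
C_{n,r,k}=\bigcup_{i=0}^{k-1}S_{n,r}(ie_{1}),
\]
the set of all lattice points at Lee distance at most $r$ from a path of $k$ consecutive vertices along $e_{1}$. For $k=1$ the capsule is the Lee sphere $S_{n,r}$, so this case is exactly the Golomb--Welch conjecture; for $k=2$ it is the double sphere $DS_{n,r}$, that is, the existence question for $DPL(n,2r+2)$ codes; for $k\geq 3$ it is a longer tube whose cross section perpendicular to $e_{1}$ is $S_{n-1,r}$. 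The exceptional case $(n,r,k)=(3,2,2)$ is realised by Minkowski's lattice tiling of $Z^{3}$ by $DS_{3,2}$.

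First I would settle the lattice sub-case using the homomorphism criterion of Section~3: a lattice tiling by $C_{n,r,k}$ exists if and only if there is a homomorphism $\phi:Z^{n}\to G$ onto a finite Abelian group $G$ with $|G|=|C_{n,r,k}|$ that is injective on $C_{n,r,k}$. Because $C_{n,r,k}$ contains a full sphere $S_{n,r}$, the images $\phi(e_{1}),\ldots,\phi(e_{n})$ already satisfy all the order and linear-independence constraints that drive Golomb--Welch-style obstructions. For $k\geq 2$ there are additional constraints coming from the elongation: the images $\phi(ie_{1}+v)$ for $0\leq i\leq k-1$ and $v\in S_{n,r}(O)$ must all be distinct. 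A careful count of the required distinct values against $|G|=|S_{n,r}|+(k-1)|S_{n-1,r}|$, combined with the restrictions on the elementary divisors of $G$ forced by the sphere-axis images, should exclude everything outside the exceptional triple.

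Next I would attack the general, possibly non-lattice, case by the local-forcing technique of \cite{Gra}. Fix a capsule $C_{0}$ containing the origin and expose its outer boundary; each boundary point must be covered by some neighbouring capsule, and the orientations of these neighbours are heavily constrained because the shape of the capsule already determines its central axis. A bounded case analysis should propagate the forced local configurations and either contradict itself or force global periodicity, at which point the problem reduces to the lattice sub-case just treated.

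The main and unavoidable obstacle is that Conjecture \ref{G} contains the Golomb--Welch conjecture as the sub-case $k=1$, and the latter is open for $n\geq 6$ outside the single pair $(n,r)=(6,2)$. A full proof of Conjecture \ref{G} therefore cannot bypass a full proof of Golomb--Welch. A realistic intermediate target, and the one on which I would concentrate effort, is to prove Conjecture \ref{G} conditional on Golomb--Welch, by establishing for $k\geq 2$ and $(n,r,k)\neq(3,2,2)$ an obstruction specific to the anisotropy of the capsule --- most plausibly by showing that any such tiling would force a sub-tiling of an $(n{-}1)$-dimensional slice perpendicular to $e_{1}$ by spheres $S_{n-1,r}$, whose non-existence is the Golomb--Welch case in dimension $n-1$.
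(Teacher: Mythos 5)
The statement you are trying to prove is a \emph{conjecture} (Conjecture~\ref{G}, taken from \cite{ADH}); the paper does not prove it and only records its status, so there is no ``paper's own proof'' to compare against. Your text is a research outline rather than a proof, and you concede the decisive obstruction yourself: the case $k=1$ of Conjecture~\ref{G} \emph{is} the Golomb--Welch conjecture, which is open for all $n\geq 7$ and settled only for $3\leq n\leq 6$ (and for $n=6$ only at $r=2$). No argument along the lines you sketch can close that gap, so at best you could hope for a conditional or partial result, which is not what the statement asserts.

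Beyond that structural impossibility, the concrete steps you propose are placeholders rather than arguments. First, the reformulation of an $r$-\textrm{PDDS}$[P_{k}]$ as a tiling by the straight capsule $\bigcup_{i=0}^{k-1}S_{n,r}(ie_{1})$ is not automatic for $k\geq 3$: a component of $[S]$ isomorphic to $P_{k}$ is an induced path in $\Lambda_{n}$ and need not lie along a single coordinate axis, and for a bent path the unique-nearest-vertex condition and the shape of the $r$-neighbourhood both change; you would first have to prove that components must be straight. Second, the phrases ``a careful count \ldots should exclude everything'' and ``a bounded case analysis should propagate the forced local configurations'' are exactly where the mathematical content would have to live; the local-forcing method of \cite{Gra} required a delicate, dimension-specific analysis already for $n=3$, $k=1$, and nothing in your sketch indicates how it would be carried out uniformly in $n$, $r$, and $k$. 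Finally, the proposed reduction to a tiling of an $(n-1)$-dimensional slice by $S_{n-1,r}$ is only a plausible heuristic: a tiling by capsules does not in general restrict to a tiling of a perpendicular hyperplane, since tiles meet that hyperplane in sections of varying shapes depending on where the axis of each capsule sits relative to it.
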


\noindent As with the Golomb-Welch and the Etzion conjecture, if true, then
Conjecture \ref{G} is best possible. The existence of $t$-\textrm{PDDS}$%
[P_{k}]$ in $\Lambda _{n}$ for $n=2,k\geq 1,t\geq 1,$ and for $n\geq 3,k\geq
1,t=1$ has been proved in \cite{ADH}, while $2$-\textrm{PDDS}$[P_{2}]$ in $%
\Lambda _{3}$ is the $DPL(3,6)$ code mentioned above.\bigskip

\noindent At the end of this section we note that in \cite{ADH} there are
other results and a conjecture on the existence of $t$-\textrm{PDDS}$[H]$
sets in $\Lambda _{n}$ outside the scope of diameter perfect Lee codes.

\section{Lattice Tilings of $Z^{n}.$}

\noindent For the purpose of this paper it was convenient to define perfect
Lee codes by means of tilings. The main reason was that with this definition
in hand we will be able to prove the existence of certain codes using a
construction of lattice tilings through group homomorphism. We believe that
Stein in \cite{S1} was the first one to use graph homomorphisms to produce a
lattice tiling, in this case a tiling by crosses. Variations of Stein's
construction can be found in many papers, see \cite{ADH, S3, HS, HB, Molnar,
Sw,  S2,  S, Sz}, where the authors use Stein's approach for lattice tilings
by different types of crosses and spheres. In this section a generalization
of Stein's construction is given. We first describe this construction and
then we provide a necessary and sufficient condition that allows one to
verify whether there is a lattice tiling of $Z^{n}$ by a given
subset.\bigskip\ 

\noindent Let $\mathcal{L}$ be a lattice (=subgroup) of $Z^{n}$. Consider
the factor group $G=Z^{n}/\mathcal{L}$. Two elements $u,v\in Z^{n}$ belong
to the same coset of $Z^{n}/\mathcal{L}$ (that is, they are two different
representations of the same element $g\in G$) if $u-v\in \mathcal{L}$.
Choose one element $v_{g}\in Z^{n}$ from each coset of $Z^{n}/\mathcal{L}$%
\thinspace\ (=for each element of the factor group $G$) and put $%
V:=\{v_{g};v\in G\}.$ Then the collection $\mathcal{T=\{}V+l;$ $l\in 
\mathcal{L\}}$ forms a partition of $Z^{n}$ by translations of $V$. As $%
\mathcal{L}$ is a lattice $\mathcal{T}$ constitutes a lattice tiling of $%
Z^{n}$ by $V.$ A different choice of the elements $v_{g},$ although
considering the same lattice $\mathcal{L}$, provides tilings of $Z^{n}$ by
different sets $V.$ Consider a homomorphism $\phi :Z^{2}\rightarrow Z_{5},$
where $\phi (e_{1})=1,$ and $\phi (e_{2})=2$. In Fig.~1 it is shown that we
can choose $V$ to be a copy of the set $\{te_{i};0\leq t\leq 4\}$ as well as 
$V$ to be a copy of the Lee sphere $S_{2,1}.$ However, we are interested in
the "inverse" process. Given a set $V,$ find a lattice tiling $\mathcal{T=\{}%
V+l;l\in \mathcal{L\}}$ of $Z^{n}$ by $V.$ To be able to apply the above
construction we need to find a lattice $\mathcal{L}$ so that it would be
possible to choose elements $v_{g}$ from individual cosets of $Z^{n}/%
\mathcal{L}$ forming a copy of the set $V.$ However, to find a lattice with
the required properties might be a painstaking process. It is nearly
impossible to do it by trying all options as usually there are infinitely
many of them. To remedy the problem the following theorem provides a way how
to construct the required tiling $\mathcal{T}$ without knowing the lattice $%
\mathcal{L}$ explicitly. It turns out that all what is needed is to find a
suitable Abelian group $G$ of order $V$ and a homomorphism $\phi
:Z^{n}\rightarrow G$ satisfying a condition that can be easily verified. %
\begin{figure}[t]
\centering
\epsfig{figure=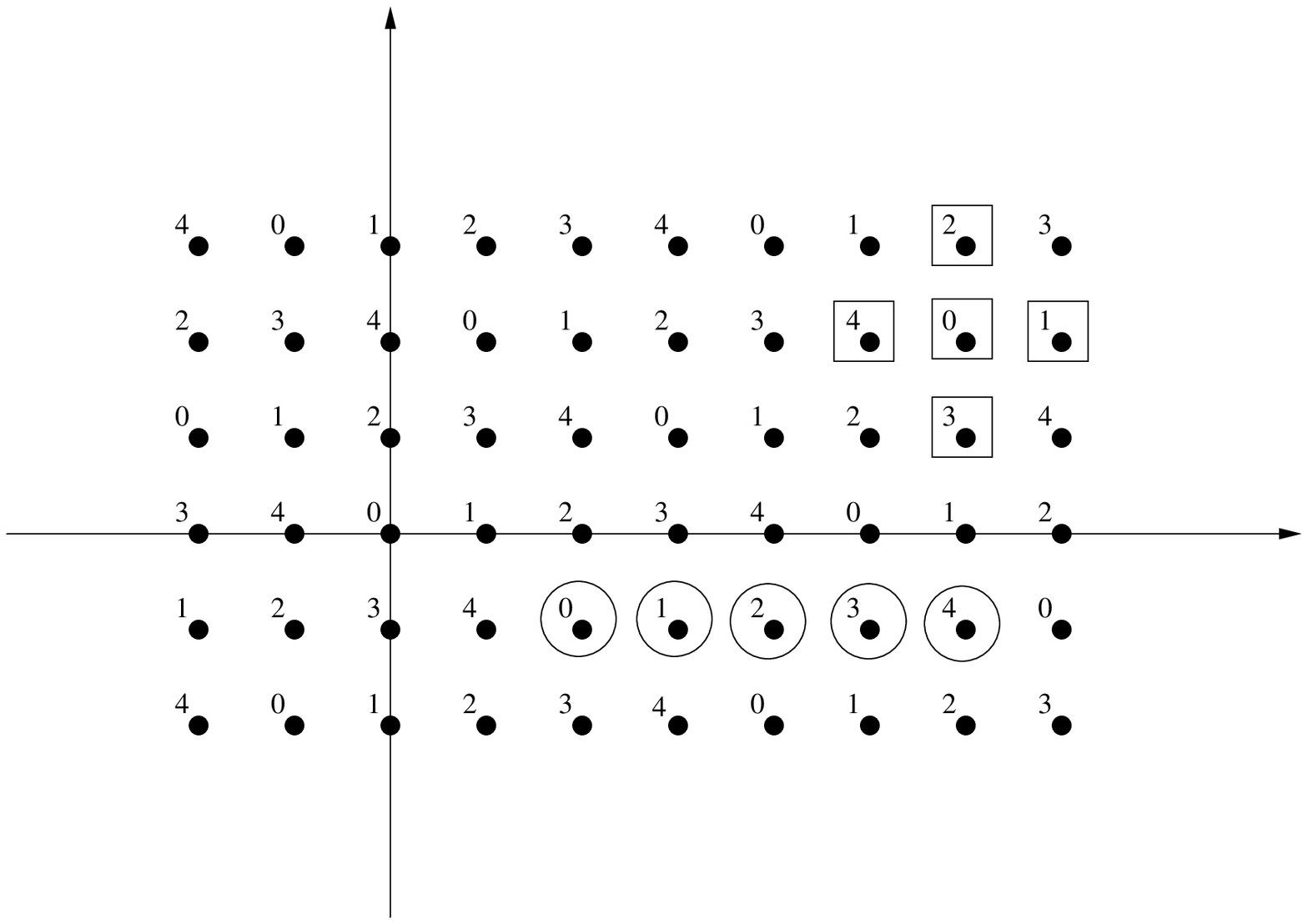, width=0.475\textwidth, angle=0}
\caption{Set $V$}
\label{fig 1}
\end{figure}

\begin{theorem}
\label{B}Let $V$ be a subset of $Z^{n}.$ Then there is a lattice tiling $%
\mathcal{T}$ of $Z^{n}$ by $V$ if and only if there is an Abelian group $G$
of order $\left\vert V\right\vert $ and a homomorphism $\phi
:Z^{n}\rightarrow G$ so that the restriction of $\phi $ to $V$ is a
bijection.\bigskip
\end{theorem}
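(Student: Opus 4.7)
The plan is to prove the two directions separately, exploiting the correspondence between lattices in $Z^{n}$ and quotient homomorphisms out of $Z^{n}$. The proof really just unpacks two equivalent ways of saying ``$V$ is a complete set of coset representatives of a subgroup.''

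For the forward direction, suppose $\mathcal{T}=\{V+l;l\in\mathcal{L}\}$ is a lattice tiling of $Z^{n}$ by $V$. First I would take $G:=Z^{n}/\mathcal{L}$ and let $\phi:Z^{n}\rightarrow G$ be the natural quotient homomorphism. Since $\mathcal{T}$ is a partition, every $u\in Z^{n}$ lies in a unique translate $V+l$, i.e.\ there is a unique $v\in V$ with $u-v=l\in\mathcal{L}$. Rephrased, every coset of $\mathcal{L}$ meets $V$ in exactly one element. This says $\phi|_{V}$ is a bijection onto $G$, and in particular $|G|=|V|$.

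For the reverse direction, given an Abelian group $G$ of order $|V|$ and a homomorphism $\phi:Z^{n}\rightarrow G$ such that $\phi|_{V}$ is a bijection, I would set $\mathcal{L}:=\ker\phi$, which is a subgroup (hence a lattice) of $Z^{n}$. Because $\phi|_{V}:V\rightarrow G$ is a bijection and $|V|=|G|$, $\phi$ is surjective, so the first isomorphism theorem gives $Z^{n}/\mathcal{L}\cong G$, with $\phi$ factoring through the quotient map. The bijectivity of $\phi|_{V}$ now translates exactly to: each coset of $\mathcal{L}$ contains precisely one element of $V$. Equivalently, for every $u\in Z^{n}$ there is a unique $v\in V$ with $u-v\in\mathcal{L}$, which is the same as saying $\{V+l;l\in\mathcal{L}\}$ is a partition of $Z^{n}$. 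Since $\mathcal{L}$ is a subgroup, this partition is a lattice tiling by $V$.

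Neither direction hides a genuine obstacle: the real content is already present in the observation, made earlier in Section~3, that any choice of coset representatives for a sublattice $\mathcal{L}\subset Z^{n}$ automatically produces a lattice tiling. The only point requiring a small verification is that in the reverse direction $\phi$ must be surjective; this follows from $|V|=|G|$ together with the bijectivity of $\phi|_{V}$, and without it one cannot identify $G$ with $Z^{n}/\mathcal{L}$. The usefulness of the theorem lies not in the proof, which is essentially a translation, but in the fact that constructing a suitable $(G,\phi)$ is typically far easier than guessing the lattice $\mathcal{L}$ directly, as illustrated by the $\phi:Z^{2}\rightarrow Z_{5}$ example preceding the statement.
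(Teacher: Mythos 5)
Your proposal is correct and follows essentially the same route as the paper: one direction takes $G=Z^{n}/\mathcal{L}$ with the quotient homomorphism, the other takes $\mathcal{L}=\ker\phi$, and both hinge on the equivalence between ``the translates $V+l$ partition $Z^{n}$'' and ``each coset of $\mathcal{L}$ meets $V$ exactly once.'' The paper phrases the two verifications as explicit contradiction arguments where you phrase them as a direct coset-counting equivalence, but the content is identical, and your remark that surjectivity of $\phi$ follows from $|V|=|G|$ together with injectivity on $V$ is a point the paper uses implicitly.
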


\begin{proof}
Let $G$ be an Abelian group and let $\phi :Z^{n}\rightarrow G$ be a
homomorphism so that the restriction of $\phi $ to $V$ is a bijection.~It is
well known that $\ker (\phi )=\mathcal{L}$ is a subgroup of $Z^{n}$ and $%
Z^{n}/\mathcal{L}=G$. As mentioned above, two elements $x,y\in Z^{n}$ belong
to the same coset of the factor group $Z^{n}/\mathcal{L}$ iff $x-y\in 
\mathcal{L}.$ To show that $\mathcal{T=\{}V+l;l\in \mathcal{L}\}$ is a
tiling of $Z^{n}$ by $V$ we need to prove that (i) $\bigcup\limits_{l\in 
\mathcal{L}}V+l=Z^{n},$ (ii) $V+l\cap V+l^{\prime }=\emptyset $ for any $%
l,l^{\prime }\in \mathcal{L},l\neq l^{\prime }.$ Assume that there is $x\in
Z^{n}$ so that $x\notin \bigcup\limits_{l\in \mathcal{L}}V+l.$ Let $\phi
(x)=g.$ Since the restriction of $\phi $ to $V$ is a bijection, there is $%
y\in V$ so that $\phi (y)=\dot{g};$ that is, $x$ and $y$ belong to the same
coset of $Z^{n}/\mathcal{L}.$ Thus, $x-y=l\in \mathcal{L},$ which in turn
implies, as $y\in V,$ that $x\in V+l;l\in \mathcal{L}$, a contradiction. The
proof of (i) is complete. Assume now that there are $l,l^{\prime }\in 
\mathcal{L},l\neq l^{\prime },$ such that there is $x$ in $Z^{n}$ with $x\in
V+l\cap V+l^{\prime }$. Hence $x$ can be written as $x=y+l$ and also as $%
x=y^{\prime }+l^{\prime }$ for some $y\neq y^{\prime }\in V.$ We have $%
y+l=y^{\prime }+l^{\prime },$ that is $y-y^{\prime }=l^{\prime }-l\in 
\mathcal{L}.$ However, $y-y^{\prime }\in \mathcal{L}$ implies $\phi
(y-y^{\prime })=0,$ that is $\phi (y)=\phi (y^{\prime }),$ a contradiction
as $\phi $ is a bijection on $V$.\bigskip

\noindent To prove the necessary part of the condition, assume that $%
\mathcal{T=\{}V+l;l\in \mathcal{L}\}$ is a lattice tiling of $Z^{n}$ by $V;$
that is, $\mathcal{L}$ is a subgroup of $Z^{n}.$ Consider the factor group $%
G=Z^{n}/\mathcal{L}$ and the mapping $\phi :Z^{n}\rightarrow G$ given by $%
\phi (x)=g=[x]$, where $[x]$ is the coset of the factor group $Z^{n}/%
\mathcal{L}$ containing $x$. It is well known that $\phi $ is a
homomorphism. To finish the proof we need to show that the restriction of $%
\phi $ to $V$ is a bijection. Assume that there are $x,y\in V$ with $\phi
(x)=\phi (y).$ Then both $x,y$ belong to the same coset of $Z^{n}/\mathcal{L}%
,$ hence $x-y=l\in \mathcal{L},l\neq 0.$ This implies that $x\in V+l=V+(x-y)$
as $V$ contains $y.$ We arrived at a contradiction that $\mathcal{T}$ $%
=\{V+l;l\in \mathcal{L}\}$ is a tiling because $x\in V+O\cap V+l,$ where $%
l\neq O$. The proof is complete.\bigskip
\end{proof}

\noindent We recall that a set $\mathcal{S}\subset $ $Z_{\ast }^{n}$ is $p$%
-periodic, $p>0,$ if $s\in \mathcal{S}$ iff $s+pe_{i}\in \mathcal{S}$ for
all $i=1,...,n,$ and $p$ is the smallest number with the property.\bigskip

\begin{corollary}
\label{C}Let $\mathcal{T}$ \ be a lattice tiling of $Z^{n}$ by $V$ given by
the homomorphism $\phi :Z^{n}$\ $\rightarrow G.$ Then $\mathcal{T}$ $\ $is $%
p $-periodic, where $p=l.c.m.\{ord(\phi (e_{1})),...,ord(\phi (e_{n}))\}\,,$
where $ord(g)$ stands for the order of the element $g$ in the group $G$%
.\bigskip
\end{corollary}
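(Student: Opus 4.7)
The plan is first to unpack what it means for the tiling $\mathcal{T}$ to be $p$-periodic. Since $\mathcal{T} = \{V+l : l \in \mathcal{L}\}$ is indexed by the lattice $\mathcal{L}$, the tiling is invariant under translation by $pe_i$ precisely when $pe_i \in \mathcal{L}$ for every $i=1,\ldots,n$; equivalently, the set $\mathcal{L} \subset Z^n$ is $p$-periodic in the sense defined just before the corollary. So I would replace the claim with the equivalent statement that the smallest positive integer $p$ such that $pe_i \in \mathcal{L}$ for all $i$ equals $\mathrm{lcm}\{\mathrm{ord}(\phi(e_1)),\ldots,\mathrm{ord}(\phi(e_n))\}$.

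Next I would invoke the construction used in the proof of Theorem \ref{B}: the lattice underlying $\mathcal{T}$ is exactly $\mathcal{L} = \ker(\phi)$. Hence for any $m \in \mathbb{Z}$ and any standard basis vector $e_i$,
\[
me_i \in \mathcal{L} \iff \phi(me_i) = 0 \iff m\cdot\phi(e_i) = 0 \iff \mathrm{ord}(\phi(e_i)) \mid m.
\]
In particular, $pe_i \in \mathcal{L}$ for every $i$ iff $p$ is a common multiple of the orders $\mathrm{ord}(\phi(e_1)),\ldots,\mathrm{ord}(\phi(e_n))$.

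To finish, I would take $p$ to be the least positive such common multiple. By definition this is $p = \mathrm{lcm}\{\mathrm{ord}(\phi(e_1)),\ldots,\mathrm{ord}(\phi(e_n))\}$, and, since $\mathcal{L}$ is a subgroup, any translation $l \in \mathcal{L}$ satisfies $l+pe_i \in \mathcal{L}$ iff $pe_i \in \mathcal{L}$, which gives the desired iff condition in the definition of periodicity. Minimality of this $p$ follows because any smaller positive period would fail the divisibility $\mathrm{ord}(\phi(e_i)) \mid p$ for some $i$, contradicting $pe_i \in \ker(\phi)$.

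The argument is essentially a one-line translation through the kernel, so there is no real obstacle; the only thing to be careful about is making explicit that the periodicity of the tiling $\mathcal{T}$ reduces to the periodicity of the index lattice $\mathcal{L}$, so that the claim becomes a statement purely about the orders of $\phi(e_i)$ in the finite Abelian group $G$.
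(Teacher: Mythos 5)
Your proof is correct and follows essentially the same route as the paper's: identify $\mathcal{L}=\ker(\phi)$, observe that $p$-periodicity of the tiling reduces to $pe_{i}\in\mathcal{L}$ for all $i$, note that the least such multiple for each $i$ is $ord(\phi(e_{i}))$, and take the least common multiple. The only difference is that you spell out the reduction from periodicity of $\mathcal{T}$ to periodicity of $\mathcal{L}$ slightly more explicitly than the paper does, which is a harmless (indeed welcome) addition.
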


\begin{proof}
Let $\mathcal{T=\{}V+l;l\in \mathcal{L\}}$, where $\mathcal{L=}\ker (\phi )$%
. It is well known that each lattice is periodic. To find the smallest
number $p$ for which $\mathcal{L}$ is periodic is equivalent to find the
smallest $p$ so that $pe_{i}\in \mathcal{L}$ for all $i=1,...,n.$ The
smallest number $p_{i}>0$ for which $p_{i}e_{i}\in \mathcal{L}$ equals the
smallest number for which $\phi (p_{i}e_{i})=\phi (e_{i})^{p_{i}}=0$ (in the
group $G$); i.e., equals $ord(\phi (e_{i})).$ Therefore, the smallest number 
$p$ so that $pe_{i}\in \mathcal{L}$ for all $i=1,...,n,$ equals $%
l.c.m.\{ord(\phi (e_{1})),...,ord(\phi (e_{n}))\}$.
\end{proof}

\noindent Now we list some properties of a homomorphism $\phi
:Z^{n}\rightarrow G$ in the above theorem that are useful to verify the
existence of a required tiling. The first of them claims that $\phi $ is
fully determined by images of $e_{i},i=1,...,n,$ under $\phi .$\bigskip

\begin{condition}
\label{1} If a mapping $\phi :Z^{n}\rightarrow (G,\circ )$ is a homomorphism
then, for any $a=(a_{1},...,a_{n})\in Z^{n},$ it is $\phi
((a_{1},...,a_{n})=\phi (e_{1})^{a_{1}}\circ ...\circ \phi (e_{n})^{a_{n}}$%
.\bigskip
\end{condition}

\begin{proof}
The proof follows from the fact that $\phi (u+v)=\phi (u)\circ \phi (v)$ for
any homomorphism $\phi $ and that $a=(a_{1},...,a_{n})$ can be written as $%
a=a_{1}e_{1}+...+a_{n}e_{n}.$\bigskip
\end{proof}

\noindent The following property is clearly true for all tilings and not
only for lattice ones. It allows to cut down significantly on the number of
cases if one wishes to prove the non-existence of a certain tiling.\ The
proof is obvious and therefore left to the reader.\bigskip

\begin{condition}
\label{2}If $\mathcal{T=\{}V+l;l\in \mathcal{L\}}$ is a tiling of $Z^{n}$ by 
$V$ then there is a tiling of $Z^{n}$ of the form $\{W+l;l\in \mathcal{L\}}$
for any translation $W$ of $V$.
\end{condition}

\noindent Based on Conditions \ref{1} and \ref{2} the following procedure
enables one to find a lattice tiling of $Z^{n}$ by a set $V$ or to prove the
non-existence of such tiling.

(i) Take arbitrary copy $W$ of $V.$ It is convenient to choose $W$ so that $%
O\in W,$ and $e_{i}\in W$ for as many $i=1,...,n$ as possible.

(ii) Choose an Abelian group $G$ of order $\left\vert V\right\vert .$

(iii) Choose a homomorphism $\phi :Z^{n}\rightarrow G$; i.e., choose $n$
elements of $G$ as $\phi (e_{i}).$ Check whether the restriction of $\phi $
to $W$ is a bijection.

\noindent As there are finitely many Abelian groups of a fixed finite order
the above procedure is finite. The computational complexity depends on the
size of $V,$ and on the number of Abelian groups of the given order.\bigskip

\noindent We illustrate the procedure by two simple examples. In Fig.~1,
it is shown how to choose $\phi (e_{i})$ to get a tiling of $Z^{2}$ into Lee
spheres $S_{2,1}.$ Consider now the set $V$ $%
=\{O,e_{1},2e_{1},e_{2},e_{1}-e_{2}\}.$ We will show that there is no
lattice tiling of $Z^{n}$ by $\dot{V}.$ Suppose that there is a lattice
tiling by $V$. Then there is a homomorphism $\phi :Z^{n}\rightarrow G$ with
required properties. Clearly $G=Z_{5}$ as there is only one Abelian group of
order $5.$ To cut down on the number of cases we note that wlog we can set $%
\phi (e_{1})=1.$ Indeed, all elements of $Z_{5}$ are its generators,
therefore there exists an automorphism $\psi $ on $Z_{5}$ with $\psi (1)=a$
for any $a\neq 0\in Z_{5}.$ We are left with two choices, namely $3$ and $4,$
for $\phi (e_{2}).$ However, in both case the restriction of $\phi $ to $V$
is not a bijection, see Fig.~2. Thus, there is no lattice tiling of $Z^{n}$
by $V.$\bigskip %
\begin{figure}[t]
\centering
\epsfig{figure=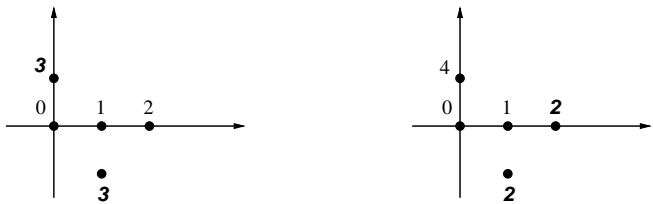, width=0.475\textwidth, angle=0}
\caption{The restriction of $\protect\phi $ to $V$}
\label{fig 2}
\end{figure}

\noindent To be able to enumerate all - not only lattice - diameter-$4$
perfect Lee codes in $Z^{n}$, we will describe a construction of non-lattice
tilings of $Z^{n}$. It turns out that a variation of Theorem \ref{B} might
be even used as a starting point to produce such tilings of $Z^{n}$.$\,\ $To
do so we need to consider a generalization of the problem to a super-lattice
of $Z^{n}$. This is a quite common situation in mathematics when a problem
has to be generalized before we are able to find its solution. To have a
geometric interpretation of this generalization, instead of tiling $Z^{n}$
by a finite set of words $V$, we will consider tilings of Euclidean $n$%
-space $\mathcal{%
\mathbb{R}
}^{n}$ by clusters of unit cubes. In this setting a tiling of $Z^{n}$ by $V$
generates a tiling of $\mathcal{%
\mathbb{R}
}^{n}$ by a cluster $C$ of unit cubes so that each word of $V$ is a center
of a unit cube in $C$. E.g., a tiling of $Z^{n}$ by a sphere $S_{n,r}$
generates a tiling of $\mathcal{%
\mathbb{R}
}^{n}$ by cubistic cross-polytopes, see Fig.~3 for $n=2,3,$ and $r=1,2$. We
note that the cubistic cross polytope with $r=1$ is also called $n$-cross.
Clearly, a tiling $\mathcal{T}$ of $\mathcal{%
\mathbb{R}
}^{n}$ by $C$ generated by a tiling of $Z^{n}$ has the following property:
Let $T,T^{\prime }$ be two tiles in $\mathcal{T}$ with cubes $C\in T$ and $%
C^{\prime }\in T^{\prime }$. If $C$ and $C^{\prime }$ have a non-empty $%
(n-1) $-dimensional intersection then $C$ and $C^{\prime }$ are neighbors;
this type of tiling is also called regular or a face-to-face tiling. We
recall that two cubes $C,C^{\prime }$ are neighbors if their centers $%
c=(c_{1},...,c_{n}),c^{\prime }=(c_{1}^{\prime },...,c_{n}^{\prime })$ have
the following property: there is an index $i$ with $\left\vert
c_{i}-c_{i}^{\prime }\right\vert =1$ and $c_{j}=c_{j}^{\prime }$\thinspace\
for $i\neq j$. Thus, a non-regular tiling $T=\{T_{i},i\in I\}$ of $\mathcal{%
\mathbb{R}
}^{n}$ by \ a cluster of cubes $C$ contains two cubes $C\in T$ and $%
C^{\prime }\in T^{\prime },T\neq T^{\prime }$ so that $C$ and $C^{\prime }$
have a non-empty $(n-1)$-dimensional intersection but they are not
neighbors. See Fig.~4 for an example of two unit cubes with a non-empty $%
(n-1)$ dimensional intersection that are not neighbors for $n=2,3.$ It turns
out that non-regular lattice tilings of $\mathcal{%
\mathbb{R}
}^{n}$ by a suitable cluster of cubes will allow us to construct
non-lattice, and even non-periodic tilings of $Z^{n}$ by spheres $S_{n,1}$
and double-spheres $DS_{n,1}.$ We will use the following straightforward
generalization of Theorem \ref{B}.

%
\begin{figure}[!t]
\centering
\subfigure[$n = 2$, $r = 1$] {
\epsfig{figure=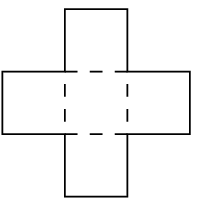, width=0.14\textwidth, angle=0}
} 
\subfigure[$n = 2$, $r = 2$] {
\epsfig{figure=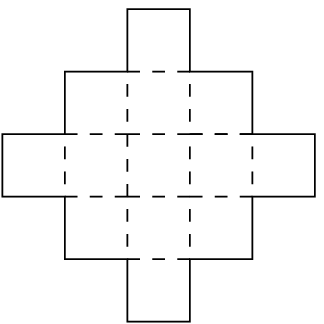, width=0.18\textwidth, angle=0}
} 
\subfigure[$n = 3$, $r = 1$] {
\epsfig{figure=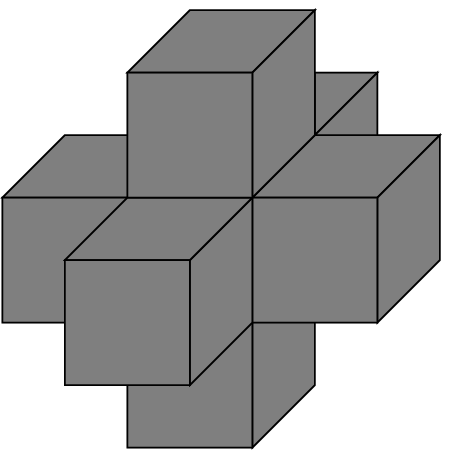, width=0.15\textwidth, angle=0}
} 
\subfigure[$n = 3$, $r = 2$] {
\epsfig{figure=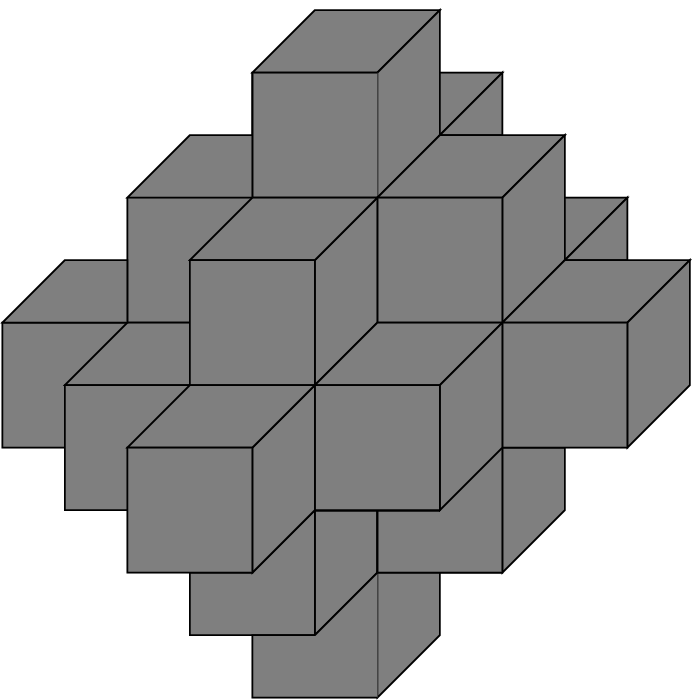, width=0.2\textwidth, angle=0}
}
\caption{Cubistic cross-polytopes}
\label{fig 3}
\end{figure}
%
%
\begin{figure}[!t]
\centering
\subfigure[$n = 2$] {
\epsfig{figure=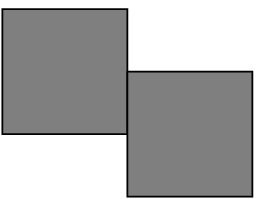, width=0.14\textwidth, angle=0}
} 
\subfigure[$n = 3$] {
\epsfig{figure=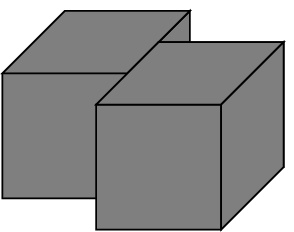, width=0.16\textwidth, angle=0}
}
\caption{Cubes having non-empty$(n-1)$ dimensional intersection that are not
neighbors}
\label{fig 4}
\end{figure}

\bigskip

\begin{theorem}
\label{D}Let $V$ be a subset of a lattice $\mathcal{M\subset 
\mathbb{R}
}^{n}.$ Then there exists a lattice tiling $\mathcal{T}$ \ of $\mathcal{M}$
by $V$ if and only if there is an Abelian group $G$ of order $\left\vert
V\right\vert $ and a homomorphism $\phi :\mathcal{M}\rightarrow G$ so that
the restriction of $\phi $ to $V$ is a bijection.\bigskip
\end{theorem}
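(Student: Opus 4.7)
The plan is to follow the proof of Theorem \ref{B} essentially verbatim, with $Z^{n}$ replaced by $\mathcal{M}$. The only conceptual ingredient needed is the observation that any lattice $\mathcal{M}\subset \mathbb{R}^{n}$, under vector addition, is a finitely generated free Abelian group (isomorphic to $Z^{k}$ where $k$ is the rank of $\mathcal{M}$), and a lattice tiling of $\mathcal{M}$ by $V$ means a tiling $\{V+l;\,l\in \mathcal{L}\}$ where $\mathcal{L}$ is a sublattice (=subgroup) of $\mathcal{M}$. Once this is in place, the entire group-theoretic machinery used in Theorem \ref{B}, namely kernels, cosets, and the factor group, transfers without change, because the proof of Theorem \ref{B} never really used the specific shape of $Z^{n}$, only that it was an Abelian group containing the sublattice $\mathcal{L}$.

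For sufficiency, I would start with a homomorphism $\phi :\mathcal{M}\rightarrow G$ whose restriction to $V$ is a bijection, set $\mathcal{L}=\ker (\phi )$, and claim that $\mathcal{T}=\{V+l;\,l\in \mathcal{L}\}$ tiles $\mathcal{M}$. The bijectivity of $\phi|_{V}$ together with $|V|=|G|$ forces $\phi $ to be surjective and $\mathcal{M}/\mathcal{L}\cong G$, so $V$ meets every coset of $\mathcal{L}$ in $\mathcal{M}$ in exactly one element. Covering is then immediate: for any $x\in \mathcal{M}$, pick the unique $y\in V$ with $\phi (y)=\phi (x)$, and note $x-y\in \mathcal{L}$, so $x\in V+(x-y)$. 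Disjointness follows by contradiction exactly as in Theorem \ref{B}: if $y+l=y^{\prime }+l^{\prime }$ with $y,y^{\prime }\in V$ and $l,l^{\prime }\in \mathcal{L}$, then $y-y^{\prime }\in \mathcal{L}$ gives $\phi (y)=\phi (y^{\prime })$, hence $y=y^{\prime }$, and then $l=l^{\prime }$.

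For necessity, given a lattice tiling $\{V+l;\,l\in \mathcal{L}\}$ of $\mathcal{M}$, I would form the factor group $G=\mathcal{M}/\mathcal{L}$ and take $\phi $ to be the canonical quotient homomorphism $x\mapsto x+\mathcal{L}$. The tiling property says precisely that every $x\in \mathcal{M}$ has a unique representation $x=y+l$ with $y\in V$, $l\in \mathcal{L}$; equivalently, $V$ is a complete and irredundant system of coset representatives for $\mathcal{L}$ in $\mathcal{M}$. This is exactly the statement that $\phi|_{V}$ is a bijection onto $G$, and it forces $|V|=|G|$.

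There is essentially no obstacle in this proof; it is a direct translation of Theorem \ref{B}. The whole value of the statement is set-up for the subsequent construction of non-regular lattice tilings of $\mathbb{R}^{n}$ by cubistic clusters: by taking $\mathcal{M}$ to be a suitable super-lattice of $Z^{n}$ (for instance, $Z^{n}$ shifted by half-integer offsets on some coordinates so as to accommodate cubes whose centers have mixed parity), one can realize non-face-to-face tilings through the same homomorphism calculus, and this is what will later produce non-lattice, and even non-periodic, $DPL(n,4)$ codes.
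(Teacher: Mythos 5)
Your proposal is correct and follows exactly the route the paper intends: the paper itself disposes of Theorem \ref{D} with the single remark that it ``can be proved by mimicking the proof of Theorem \ref{B},'' and your argument is precisely that mimicry, with the (correct) supporting observation that a lattice $\mathcal{M}$ is a free Abelian group so the kernel/coset machinery transfers verbatim. Nothing further is needed.
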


\noindent Theorem \ref{D} can be proved by mimicking the proof of Theorem %
\ref{B}. To construct a non-regular tiling we will choose $\mathcal{M}$ to
be the lattice generated by vectors $\frac{1}{2}e_{1},e_{2},...,e_{n}.$ Then
a tiling $\mathcal{T}$ of $\mathcal{%
\mathbb{R}
}^{n}$ by a cluster of cubes with their centers in $\mathcal{M}$ will be
non-regular if and only if there are in $\mathcal{T}$ a cube $C$ with its
center at an integral point and a cube $C^{\prime }$ with its center at a
point whose first coordinate is of form $m+0.5,m\in Z$. Theorem \ref{E}
provides a sufficient condition for the existence of such non-regular tiling
of $\mathcal{%
\mathbb{R}
}^{n}$ by a cluster of cubes. \bigskip

\begin{theorem}
\label{E}Let $C$ be a cluster of cubes centered at $V\subset Z^{n},$ and $%
\mathcal{M}$ be a lattice generated by vectors $\frac{1}{2}%
e_{1},e_{2},...,e_{n}.$ Then there is a lattice non-regular tiling of $%
\mathcal{%
\mathbb{R}
}^{n}$ by $C$ if there is a homomorphism $\phi :\mathcal{M\rightarrow }G,$
an Abelian group of order $2\left\vert V\right\vert ,$ so that the
restriction of $\phi $ to $W=V\cup \{V+\frac{1}{2}e_{1}\}$ is a bijection
and $\phi (e_{i})$ is a generator of $G$ for some $i\geq 2.$\bigskip\ \ \ \ 
\end{theorem}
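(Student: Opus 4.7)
The plan is to apply Theorem \ref{D} to the lattice $\mathcal{M}$ and the set $W=V\cup (V+\tfrac{1}{2}e_{1})$, and then extract a lattice tiling of $\mathbb{R}^{n}$ by $C$ from the resulting tiling of $\mathcal{M}$. Since $V\subset Z^{n}$ and $V+\tfrac{1}{2}e_{1}\subset Z^{n}+\tfrac{1}{2}e_{1}$ are disjoint, $|W|=2|V|=|G|$, so the bijectivity of $\phi|_{W}$ lets me invoke Theorem \ref{D}: with $\mathcal{L}=\ker (\phi)$, the family $\{W+l:l\in \mathcal{L}\}$ tiles $\mathcal{M}$, and splitting $W$ yields the disjoint decomposition $\mathcal{M}=(V+\mathcal{L})\sqcup ((V+\mathcal{L})+\tfrac{1}{2}e_{1})$.

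Next I would show that $\{C+l:l\in \mathcal{L}\}$ is a lattice tiling of $\mathbb{R}^{n}$. For disjointness, two unit cubes centered at distinct points $p,q\in \mathcal{M}$ overlap in their interiors only when every coordinate of $p-q$ has absolute value less than $1$; within $\mathcal{M}$ this forces $p-q\in \{\pm \tfrac{1}{2}e_{1}\}$, and such a relation would place one element of $V$ and one element of $V+\tfrac{1}{2}e_{1}$ in a common coset of $\mathcal{L}$, contradicting the bijectivity of $\phi|_{W}$. For coverage I would observe that a generic $x\in \mathbb{R}^{n}$ lies in exactly two unit cubes with centers in $\mathcal{M}$, and those two centers differ by $\pm \tfrac{1}{2}e_{1}$; the decomposition above forces exactly one of them to lie in $V+\mathcal{L}$, so $x$ is covered precisely once. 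The hard part here is this ``exactly one'' property, which reduces to the identity $\phi (V)+\phi (e_{1})=\phi (V)$; this follows by applying $+\phi (\tfrac{1}{2}e_{1})$ twice to the partition $\phi (V)\sqcup (\phi (V)+\phi (\tfrac{1}{2}e_{1}))=G$ coming from the bijectivity of $\phi|_{W}$.

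Finally, for non-regularity I would exploit the generator hypothesis: if $\phi (e_{i})$ generates $G$ for some $i\geq 2$, then $\phi (\tfrac{1}{2}e_{1})=k\phi (e_{i})$ for some integer $k$, and $l_{0}:=\tfrac{1}{2}e_{1}-ke_{i}$ is a nonzero element of $\mathcal{L}$ with half-integer first coordinate. Consequently the tile $C+l_{0}$ consists of cubes with half-integer first coordinate, while $C$ itself consists of integer-positioned cubes. Writing $R_{1}$ (respectively $R_{2}$) for the union of interiors of integer-positioned (respectively half-integer-positioned) cubes of the tiling, both are nonempty proper open subsets of $\mathbb{R}^{n}$; any line segment from a point of $R_{1}$ to a point of $R_{2}$ must exit $R_{1}\cup R_{2}$ through a boundary point lying on an $(n-1)$-face shared by an integer-positioned cube and a half-integer-positioned cube. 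The center displacement across such a face has a half-integer first coordinate, hence cannot be a unit basis vector, so the two cubes are non-neighbors---supplying the required non-regular adjacency.
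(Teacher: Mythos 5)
Your proposal is correct and follows essentially the same route as the paper: Theorem \ref{D} applied to $\mathcal{M}$ and $W=V\cup\{V+\frac{1}{2}e_{1}\}$ gives the lattice tiling, which induces the tiling of $\mathbb{R}^{n}$ by $C$, and the generator hypothesis supplies the kernel vector $\frac{1}{2}e_{1}-ke_{i}$ with half-integer first coordinate that forces non-regularity. The only difference is one of detail: you verify directly (via the overlap analysis and the identity $\phi(V)+\phi(e_{1})=\phi(V)$) what the paper treats as geometrically evident by splitting each unit cube into two half-cubes centered at the points of $W$, and you likewise spell out the criterion that mixed integer and half-integer first coordinates imply non-regularity, which the paper asserts in the discussion preceding the theorem.
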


\begin{proof}
The homomorphism $\phi $ determines a tiling $\mathcal{T=\{}W+l;l\in 
\mathcal{L\}}$ of $\mathcal{M}$ by $W,$ where $\mathcal{L}=\ker (\phi )$ is
a lattice. \ Clearly, $\mathcal{T}$ generates a tiling of $\mathcal{%
\mathbb{R}
}^{n}$ by the cluster $C$ of cubes centered in $V$. Indeed, split each cube
in $C$ by a hyperplane orthogonal to $e_{1}.$ Then we get a collection of $%
2\left\vert C\right\vert $ half-cubes centered at points of $W=V\cup \{V+%
\frac{1}{2}e_{1}\}$ in $\mathcal{M}.$ Thus both a tiling $\mathcal{F}$ of $%
Z^{n}$ by $V$ and a tiling $\mathcal{F}^{\prime }\,\ $of $\mathcal{M}$ by $%
W=V\cup \{V+\frac{1}{2}e_{1}\}$ they generate a tiling of $\mathcal{%
\mathbb{R}
}^{n}$ by the cluster $C$. The only difference between $\mathcal{F}$ and $%
\mathcal{F}^{\prime }$ is that while the centers of cubes of $\mathcal{F}$
are in $Z^{n},$ the centers of the half-cubes in $\mathcal{F}^{\prime }$ are
in $\mathcal{M}.$ To prove that $\mathcal{T}$ generates a non-regular tiling
of $\mathcal{%
\mathbb{R}
}^{n}$ it suffices to show that there are $l,l^{\prime }$ in $\mathcal{L}$
so that $l_{1}$ is an integer while $l_{1}^{\prime }$ is of the form $%
m+0.5,m\in Z\,,$ where $l_{1}$ and $l_{1}^{\prime }$ are the first
coordinates of $l$ and $l^{\prime },$ respectively. Indeed, then the centers
of cubes in $C+l$ will be at integer points, while the centers of cubes in $%
C+l^{\prime }$ will not. Since $\phi (e_{i}),$ for some $i\geq 2,$ is a
generator of $G$ then there are numbers $a$,$b\in Z$ so that $\phi
(ae_{i})=0 $ and $\phi (\frac{1}{2}e_{1}+be_{i})=0$. Thus, the two vectors $%
l=ae_{i},l^{\prime }=\frac{1}{2}e_{1}+be_{i}\in \mathcal{L}$ have the
required property.
\end{proof}

\section{Diameter perfect Lee codes}

\noindent This section contains main results of our paper. We focus on the
existence and enumeration of diameter perfect Lee codes over $Z_{q}^{n}$ and 
$Z^{n}.$ It is conjectured in \cite{E} and \cite{ADH}, see Section 2, that
for $n\geq 3$ these codes exist only when $d=3,4,$ with the exception of $%
DPL(3,6)$ code constructed by Minkowski. We concentrate on linear codes as
they are most important from the practical point of view. All values of $q$
for which there exists a linear $PL(n,1,q)$ code (=linear $DPL(n,3,q)$ code)
have been found in \cite{H}. Now we determine all values of $q\,$for which
there is a linear $DPL(n,4,q)$ code. Clearly, it suffices to determine all
values $q$ for which there exists a linear \textit{non-periodic} $DPL(n,d,q)$
code as all other linear $DPL(n,d,q)$ codes can be obtained by a periodic
extension of a non-periodic one. \bigskip

\noindent The following theorem claims that instead of investigating
diameter perfect codes of $Z_{q}^{n}$ one can confine himself to codes in $%
Z^{n}$. The statement is intuitively clear and in some papers the authors
take validity of analogous cases for granted. We will sketch its proof. We
recall that a set $\mathcal{S}\subset $ $Z_{\ast }^{n}$ is $p$-periodic, $%
p>0,$ if $s\in \mathcal{S}$ iff $s+pe_{i}\in \mathcal{S}$ for all $%
i=1,...,n, $ and $p$ is the smallest positive number with the
property.\bigskip

\begin{theorem}
\label{A}For $d$ is even, there is a linear $DPL(n,d,q)$ code if and only if
there is a linear $DPL(n,d)$ $p$-periodic code with $p|q.$\bigskip
\end{theorem}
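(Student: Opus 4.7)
The plan is to construct a bijection between linear $p$-periodic $DPL(n,d)$ codes in $Z^n$ with $p \mid q$ and linear $DPL(n,d,q)$ codes in $Z_q^n$, via the canonical projection $\pi: Z^n \to Z_q^n = Z^n/qZ^n$. The key observation is that a subgroup $\mathcal{L}' \subset Z^n$ is $p$-periodic with $p \mid q$ if and only if $qZ^n \subset \mathcal{L}'$, in which case $\mathcal{L} := \pi(\mathcal{L}')$ is a sublattice of $Z_q^n$ and $\mathcal{L}' = \pi^{-1}(\mathcal{L})$. I will also rely on the equivalent description from Theorem \ref{X}: for $d = 2r+2$ even, a linear diameter-$d$ perfect Lee code corresponds to a lattice tiling by translates of the double-sphere $DS_{n,r}$.

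For the $(\Leftarrow)$ direction, starting from a linear $p$-periodic $DPL(n,d)$ code $\mathcal{L}' \subset Z^n$ with $p \mid q$, I fix a double-sphere $DS_0$ containing $0$ so that $\{DS_0 + l : l \in \mathcal{L}'\}$ tiles $Z^n$. Since $q \geq d = 2r+2$, the coordinate span of $DS_0$ is strictly less than $q$ in every direction, so $DS_0$ injects isometrically into $Z_q^n$ under $\pi$. Applying $\pi$ to the tiling yields the candidate $\{DS_0 + \bar l : \bar l \in \mathcal{L}\}$; coverage is immediate, and disjointness follows because a collision $d + \bar l = d' + \bar l'$ in $Z_q^n$ would lift to $d + l = d' + l' + qk$ in $Z^n$ for some $l, l' \in \mathcal{L}'$ and $k \in Z^n$, whereupon the coordinate bound on $DS_0$ forces $d = d'$ and hence $\bar l = \bar l'$. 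The diameter-$d$ condition transfers because the Lee distance in $Z_q^n$ is the minimum of the $Z^n$-Lee distance over all pairs of lifts, and every lift of an element of $\mathcal{L}$ lies in $\mathcal{L}'$.

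For the $(\Rightarrow)$ direction, given a linear $DPL(n,d,q)$ code $\mathcal{L}$, I set $\mathcal{L}' := \pi^{-1}(\mathcal{L})$, a sublattice of $Z^n$ containing $qZ^n$; its minimal period therefore divides $q$. The tiling $\{DS_0 + \bar l : \bar l \in \mathcal{L}\}$ of $Z_q^n$ unravels to $\{DS_0 + l : l \in \mathcal{L}'\}$ in $Z^n$, with coverage and disjointness inherited from the $Z_q^n$ tiling together with the isometric embedding of $DS_0$. For the distance condition on $\mathcal{L}'$, two distinct $u, v \in \mathcal{L}'$ either project to distinct elements of $\mathcal{L}$---in which case $\rho_L(u,v)$ in $Z^n$ dominates the Lee distance of $\pi u, \pi v$ in $Z_q^n$, which is at least $d$---or differ by a nonzero element of $qZ^n$, in which case $\rho_L(u,v) \geq q \geq d$.

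The one point requiring care is the inequality $q \geq d$ needed for $DS_{n,r}$ to embed isometrically into $Z_q^n$; this assumption is essentially forced (a $DPL(n,d,q)$ code cannot meaningfully exist otherwise), but it must be invoked explicitly both to project the $Z^n$ tiling in $(\Leftarrow)$ and to preclude self-collisions of a tile under $qZ^n$-translation in $(\Rightarrow)$. Once this is in place, both directions reduce to routine verifications that the projection and preimage operations commute with the tiling and distance structures.
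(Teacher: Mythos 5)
Your proposal is correct and follows essentially the same route as the paper: the paper's ``extension'' $\mathcal{L}^{\prime }=\{l+\sum k_{i}qe_{i}\}$ is exactly your $\pi ^{-1}(\mathcal{L})$, its ``restriction'' $\mathcal{L}\cap Z_{q}^{n}$ is your $\pi (\mathcal{L}^{\prime })$, and you supply the same routine verifications (injectivity of $DS_{n,r}$ under the projection for $q\geq d$, lifting collisions) that the paper dismisses as technical. One small correction: the fact that a linear diameter-$d$ perfect code gives rise to a tiling by translates of $DS_{n,r}$ along the code itself is not contained in Theorem \ref{X} (which only gives the converse, tiling $\Rightarrow$ code); the paper proves this as a separate Lemma at the start of its argument (disjointness from the distance/anticode bounds, coverage from a volume count), and your write-up should include that short argument rather than cite it as known.
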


\begin{proof}
We start with the following auxiliary statement.\bigskip

\noindent \textit{Lemma. }Let $\mathcal{L}$ be a linear diameter-$d$ perfect
Lee code in $Z_{\ast }^{n}$ and $W$ be a double-sphere $DS_{n,r}$ with $r=%
\frac{d-2}{2}.$ Then $\mathcal{T}=\{W+l;l\in \mathcal{L}\}$ is a lattice
tiling of $Z_{\ast }^{n}$ by double-spheres.\bigskip

\noindent First we prove this lemma. Assume that for $l\neq l^{\prime },$
there is $x\in Z_{\ast }^{n}$ such that $x\in W+l\cap W+l^{\prime }.$ Then
there are $w,w^{\prime }\in W$ so that $x=w+l=w^{\prime }+l^{\prime }.$ This
in turn implies $\left\vert w-w^{\prime }\right\vert =\left\vert l^{\prime
}-l\right\vert $ which is a contradiction since $\left\vert l-l^{\prime
}\right\vert =\rho _{L}(l,l^{\prime })\geq d$ for any $l,l^{\prime }\in 
\mathcal{L}$ as $\mathcal{L}$ is a diameter-$d$ code while $\left\vert
w-w^{\prime }\right\vert =\rho _{L}(w,w^{\prime })<d$ for any $w,w^{\prime
}\in W$ because $W$ is an anticode of diameter $d-1.$ To finish this part of
the proof we need to show that $\mathop{\textstyle \bigcup }\limits_{l\in 
\mathcal{L}}W+l=Z_{\ast }^{n}.$ This is obvious for $Z_{\ast }^{n}=Z_{q}^{n}$
as the space is finite and $\mathcal{L}$ is a transversal of a tiling by
double-spheres. If $\mathcal{L}$ is a linear code then the tiling is a
lattice tiling. Assume now that $\mathcal{L}$ is a linear $DPL(n,d)$ code,
and $\mathop{\textstyle \bigcup }\limits_{l\in \mathcal{L}}W+l\subsetneqq
Z^{n}.$ Then the volume $V(\mathcal{L)}$ of the lattice $\mathcal{L}$ would
be strictly bigger than the volume of the double-sphere $DS_{n,r}$ which is
a contradiction as $\mathcal{L}$ is a transversal of a tiling of $Z^{n}$ by
double-spheres. The proof of Lemma is complete.\bigskip

\noindent Let $\mathcal{L}$ be a linear $DPL(n,d,q)$ code. Then $\rho
_{L}(l,l^{\prime })\geq d$ for all $l,l^{\prime }\in \mathcal{L},$ and by
the above lemma $T=\{W+l;l\in \mathcal{L\}}$ is a lattice tiling of $%
Z_{q}^{n}$ by double-spheres $DS_{n,r},r=\frac{d-2}{2}.$ Consider an
extension of $\mathcal{L}$ given by $\mathcal{L}^{\prime }=\{l^{\prime
};l^{\prime }=l+\sum\limits_{i=1}^{n}k_{i}qe_{i},$ where $l\in \mathcal{L},$
and $k_{i}\mathcal{\in }Z$ for all $i=1,..,n\}.$ It is a matter of technical
routine to prove that $\mathcal{L}^{\prime }$ is a lattice as well with $%
\rho _{L}(l,l^{\prime })\geq d$ for all $l,l^{\prime }\in \mathcal{L}%
^{\prime }$, of period $q,$ and $\mathcal{T}^{\prime }=\{W+l^{\prime
};l^{\prime }\in \mathcal{L}^{\prime }\}$ is a lattice tiling of $Z^{n}$.
Therefore, $\mathcal{L}^{\prime }$ is a linear $DPL(n,d)$ code that is $p$%
-periodic for some $p|q.$ On the other hand, let $\mathcal{L}$ be a linear $%
DPL(n,d)$ $q$-periodic code. Then, $\rho _{L}(l,l^{\prime })\geq d$ for all $%
l,l^{\prime }\in \mathcal{L},$ and by our lemma, $T=\{W+l;l\in \mathcal{L\}}$
is a lattice tiling of $Z^{n}.$ Let $\mathcal{L}^{\prime }=\mathcal{L}\cap
Z_{q}^{n}$ be a restriction of $\mathcal{L}$ to $Z_{q}^{n}.$ Clearly, $\rho
_{L}(l,l^{\prime })\geq d$ for every $l,l^{\prime }\in \mathcal{L}^{\prime
}. $ Further, as above, it is a matter of routine to verify that $\mathcal{L}%
^{\prime }$ is a lattice and $\{W+l;l\in \mathcal{L}^{\prime }\}$ is a
tiling of $Z_{q}^{n}$ by double-spheres $DS_{n,r},r=\frac{d-2}{2}.$ Thus, $%
\mathcal{L}^{\prime }$ is a linear $DPL(n,d,q)$ code.\bigskip
\end{proof}

\noindent The previous theorem claims that instead of showing the existence
of a distance-$d$ perfect linear Lee code over $Z_{q}^{n}$ (=a tiling of $%
Z_{q}^{n}$ by double-spheres) it suffices to prove the existence of a
linear, $p$-periodic, $p|q,$ distance-$d$ perfect Lee code over $Z^{n}$ (=a
tiling of $Z^{n}$ by double-spheres)$.$ The difference in difficulty of
proving the existence of a lattice tiling of $Z_{q}^{n}$ and of $Z^{n}$
might seem to be negligible. However, when looking for a tiling of $Z^{n}$
one has at its disposal a powerful construction based on groups homomorphism
described in the previous section.\bigskip

\noindent We are ready to determine all $q$ for which there exists a
non-periodic linear $DPL(n,4,q)$ code.\bigskip

\begin{theorem}
Let $n=2^{\alpha }p_{1}^{\alpha _{1}}...p_{k}^{\alpha _{k}}$ be the prime
number factorization of $n,$ where possibly $\alpha =0,$ and $p_{i}>2$ for $%
i=1,...,k.$ Set $p=p_{1}...p_{k}.$ Then a linear non-periodic $DPL(n,4,q)$
code exists if and only if $q=2^{\beta }p_{1}^{\beta _{1}}...p_{k}^{\beta
_{k}},$ where $2\leq \beta \leq \alpha +2,$ and $1\leq \beta _{i}\leq \alpha
_{i}$. In particular, the smallest $q$ for which there exists a linear
non-periodic $DPL(n,4,q)$ code equals $4p.$\bigskip
\end{theorem}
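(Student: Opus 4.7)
The plan is to rephrase the claim as a purely group-theoretic question using the machinery developed earlier, and then settle both directions. By Theorem~\ref{A}, a linear non-periodic $DPL(n,4,q)$ code is the same as a linear $q$-periodic $DPL(n,4)$ code over $Z^n$. Since $d=4$ is even, Theorem~\ref{X} identifies such a code with a lattice tiling of $Z^n$ by $DS_{n,1}$ whose translate lattice $\mathcal{L}$ has all even-weight elements, and Theorem~\ref{B} represents that lattice tiling as a homomorphism $\phi\colon Z^n\to G$ onto an abelian group $G$ of order $|DS_{n,1}|=4n$ whose restriction to $DS_{n,1}$ is a bijection. Setting $g_i:=\phi(e_i)$, the even-weight condition is equivalent to the existence of a homomorphism $\bar\pi\colon G\to Z_2$ with $\bar\pi(g_i)=1$ for every $i$; and by Corollary~\ref{C} the period of the tiling is $\mathrm{lcm}(\mathrm{ord}(g_1),\ldots,\mathrm{ord}(g_n))=\exp(G)$, the last equality holding because the $g_i$'s generate $G$.

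For necessity, assume such a $\phi$ of period $q$ exists. The images $0,g_1,-g_1,2g_1$ of $O,e_1,-e_1,2e_1$ must be four distinct elements of $G$, forcing $\mathrm{ord}(g_1)\ge 4$ and hence $\beta\ge 2$. For each prime $p_i\mid|G|=4n$, the nontrivial $p_i$-Sylow subgroup of $G$ is generated by the $p_i$-components of $g_1,\ldots,g_n$, so at least one $g_j$ has order divisible by $p_i$, giving $\beta_i\ge 1$. The upper bounds $\beta\le\alpha+2$ and $\beta_i\le\alpha_i$ follow immediately from $q=\exp(G)\mid|G|=4n$.

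For sufficiency, I construct an explicit $\phi$ for every admissible $q=2^\beta p_1^{\beta_1}\cdots p_k^{\beta_k}$. The model case $\beta=2$ is the clean one: take $G=Z_4\times H$ with
\[
H=Z_2^{\alpha}\times\prod_{i=1}^{k}\bigl(Z_{p_i^{\beta_i}}\times Z_{p_i}^{\alpha_i-\beta_i}\bigr),
\]
so $|H|=n$ and $\exp(H)=2\prod_i p_i^{\beta_i}$ (with the convention $Z_2^0=\{0\}$). Fix any bijection $\sigma\colon\{1,\ldots,n\}\to H$ and put $g_i:=(1,\sigma(i))$. Grouping the $4n$ images of $DS_{n,1}$ by their $Z_4$-coordinate yields four blocks of $n$ images; a direct check shows that inside each block the second coordinates form a single translate of $\sigma(\{1,\ldots,n\})=H$, and hence $\phi|_{DS_{n,1}}$ is bijective. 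The parity map is the composition $G\to Z_4\to Z_2$, which sends every $g_i$ to $1$, and the period equals $\mathrm{lcm}(4,\exp(H))=q$. For $\beta\ge 3$ I enlarge the cyclic factor and take $G=Z_{2^\beta}\times K$ with $|K|=4n/2^\beta$ and $\exp(K)$ tailored so that $\exp(G)=q$; the extra room inside $Z_{2^\beta}$ is handled via its partition into the $2^{\beta-2}$ \emph{quadruples} $\{a,-a,1+a,1-a\}$ indexed by odd $a\in\{1,3,\ldots,2^{\beta-1}-1\}$, and the $g_i$'s are indexed by pairs (quadruple, $K$-coordinate). The minimal admissible $q=4p$ corresponds to $\beta=2$ with every $\beta_i=1$, which establishes the final sentence of the theorem.

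I expect the main obstacle to be the bijectivity verification in the $\beta\ge 3$ case. In the $\beta=2$ construction every $g_i$ has identical $Z_4$-coordinate equal to $1$, so any collision of images reduces to a collision of $\sigma(i)$'s, which does not occur. For $\beta\ge 3$ the first-coordinate images $\{a_i,-a_i,1+a_i,1-a_i\}$ each occupy only one quadruple of $Z_{2^\beta}$, so one must choose the $a_i$'s to cover distinct quadruples when their $K$-coordinates agree and distinct $K$-coordinates when their quadruples agree; a quadruple-by-quadruple verification, using that each quadruple contains exactly four elements and that $|K|\cdot 2^{\beta-2}=n$, then closes the argument.
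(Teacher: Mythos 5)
There is a genuine gap in your necessity argument, precisely at the inequality $\beta\ge 2$. From the distinctness of the four images $0,g_1,-g_1,2g_1$ of $O,e_1,-e_1,2e_1$ you conclude $\mathrm{ord}(g_1)\ge 4$ ``and hence $\beta\ge 2$.'' That implication fails: $\mathrm{ord}(g_1)$ could equal $6$, say, in which case you learn only that $2\mid q$ and $3\mid q$; nothing so far forces $4\mid \exp(G)$. Concretely, for $n=3$ your argument does not exclude $G=Z_2\times Z_6$, whose exponent is $6$, which would yield $q=6$ and contradict the theorem. What is needed, and what the paper proves, is the stronger divisibility $4\mid \mathrm{ord}(g_1)$. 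The paper gets it by a counting argument: the $n$ sets $\phi(\{\pm e_i,\pm e_i+e_1\})$ partition $G$ (this is exactly the bijectivity of $\phi$ on $V$), and each of them meets the cyclic subgroup $H=\langle g_1\rangle$ in either $0$ or $4$ elements, since if $\phi(e_i)\in H$ then all four of $\pm\phi(e_i),\pm\phi(e_i)+g_1$ lie in $H$, while if $\phi(e_i)\notin H$ then none do. Summing over $i$ gives $4\mid|H|=\mathrm{ord}(g_1)$, hence $4\mid\exp(G)=q$. Your Sylow argument for the odd primes $p_i$ is correct (and cleaner than the paper's case analysis with the element $(1,0,\dots,0)$), but it cannot be upgraded to control the exact power of $2$; the counting argument is the missing ingredient. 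The upper bounds $\beta\le\alpha+2$, $\beta_i\le\alpha_i$ are fine.

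Your sufficiency construction is genuinely different from the paper's and, as far as I can check, sound. The paper takes $G=Z_q\times Z_{t_2}\times\cdots\times Z_{t_s}$, assigns blocks of $q/4$ or $q/2$ coordinates to elements of order $2$ or to pairs $\{b,-b\}$, and then verifies the even-weight condition by exhibiting an explicit even-weight basis of $\ker\phi$ and computing a determinant. Your version replaces that basis computation by the parity homomorphism $\bar\pi\colon G\to Z_2$ with $\bar\pi(g_i)=1$, which is a cleaner way to certify that every element of $\ker\phi$ has even Lee weight, and your $\beta=2$ construction $g_i=(1,\sigma(i))$ is a tidy special case. The $\beta\ge 3$ verification you defer does close: each residue $c\in Z_{2^\beta}$ lies in exactly one quadruple $\{a,-a,1+a,1-a\}$ in exactly one of the four roles, so the images of $V$ with first coordinate $c$ are exactly $\{c\}\times K$, giving bijectivity; and taking the $2$-part of $K$ elementary abelian makes $\exp(G)=\mathrm{lcm}(2^\beta,\exp(K))=q$ as required. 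So the sufficiency half would survive refereeing once written out; the necessity half needs the repair above.
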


\begin{proof}
Assume that there is a linear non-periodic $DPL(n,4,q)$ code. By Theorem \ref%
{A} there is a linear $q$-periodic $DPL(n,d)$ code $\mathcal{L},$ and by the
lemma stated in the proof of Theorem \ref{A}, $\mathcal{T=}\{V+l;l\in 
\mathcal{L}\}$ is a lattice tiling of $Z^{n}$ by double-spheres $DS_{n,1}.$
As a copy $V$ of $DS_{n,1}$ we choose $V=\{\pm e_{i},\pm
e_{i}+e_{1};i=1,...,n\}.$ Clearly, $\left\vert V\right\vert =4n.$ By Theorem %
\ref{B} there is a homomorphism $\phi :Z^{n}\rightarrow G,$ an Abelian group
of order $4n,$ such that the restriction of $\phi $ to $V$ is a bijection.
First we show that $4p|q.$ We note that $4|ord(\phi (e_{1}))$. Indeed,
consider the subgroup $H$ generated by $\phi (e_{1}).$ Then it is easy to
check that for each $i,1\leq i\leq n$, $\left\vert \phi (\pm e_{i},\pm
e_{i}+e_{1})\cap H\right\vert =0$ or $4.$ Note that this is true also for $%
i=1.$ Thus $4|ord(\phi (e_{1})),$ which in turn implies $4|q,$ see Corollary %
\ref{C}$.$ Now we prove that $p|q.$ Let $G=Z_{t_{1}}\times Z_{t_{2}}\times
...\times Z_{t_{s}},$ be the factorization of $G,$ where possibly $s=1;$
that is, possibly $G$ is a cyclic group. We recall that $n=2^{\alpha
}p_{1}^{\alpha _{1}}...p_{k}^{\alpha _{k}},$ so $4n=2^{\alpha
+2}p_{1}^{\alpha _{1}}...p_{k}^{\alpha _{k}}.$ We show that, for every $%
j=1,...,k,$ there is $i,1\leq i\leq n,$ depending only on $j$ so that $%
p_{j}|ord(\phi (e_{i})).$ By Corollary \ref{C}, this will in turn imply that 
$p|q.$ As $4n=t_{1}\times ...\times t_{s},$ there is $t_{i}$ so that $%
p_{j}|t_{i};$ wlog assume $p_{j}|t_{1}.$ Let $g=\phi
(e_{1})=(g_{1},g_{2}...,g_{s}).$ We consider two cases. If $p_{j}\nmid
g_{1}, $ then $p_{j}|ord(g)$ and we are done. Otherwise, for $p_{j}|g_{1},$
consider the element $f=(1,0,...,0).$ Since the restriction of $\phi $ to $V$
is a bijection, there is $i$ so that $f\in \phi (\{\pm e_{i},\pm
e_{i}+e_{1}\}.$ If $f=\phi (e_{i})$ or $f=\phi (e_{i})^{-1}$ we are done as $%
p_{j}|ord(\phi (e_{i}))=t_{1}$. Finally we are left with two cases, either $%
f=\phi (e_{i}+e_{1})=\phi (e_{1})+g$ or $f=\phi (-e_{i}+e_{1})=-\phi
(e_{i})+g$ for some $i$. In the former case, $\phi
(e_{i})=f-g=(1-g_{1},-g_{2},...,-g_{s}),$ in the latter case $\phi
(e_{i})=g-f=(-1+g_{1},g_{2},...,g_{s}).$ As $p_{j}|g_{1}$ and $p_{j}|t_{1}$
we have $p_{j}\nmid \pm (1-g_{1}),$ that is $p_{j}|ord(\phi (e_{i}))$ in
both cases. Thus we proved that $p_{j}|q,$ which implies that $p|q.$ In
aggregate, $4p|q.$ It is obvious that $q\leq 4n.$ Indeed, by Corollary \ref%
{C}, $\mathcal{T}$ is $q$-periodic where $q=l.c.m.(ord(\phi
(e_{1})),...,ord(\phi (e_{n}))).$ As the order of each element divides the
order of the group also $q$ does, so $q\leq ord(G)=4n,$ which in aggregate
implies that $q=2^{\beta }p_{1}^{\beta _{1}}...p_{k}^{\beta _{k}},$ where $%
2\leq \beta \leq \alpha +2,$ and $1\leq \beta _{i}\leq \alpha _{i}$.\bigskip

\noindent Now we show that there is a linear non-periodic $DPL(n,4,q)$ code
for each $q=2^{\beta }p_{1}^{\beta _{1}}...p_{k}^{\beta _{k}},$ where $2\leq
\beta \leq \alpha +2,$ and $1\leq \beta _{i}\leq \alpha _{i}$. \ By Theorems %
\ref{X} and \ref{A}, it suffices to construct a $q$-periodic, lattice tiling 
$\mathcal{T}$ $=\{V+l;l\in \mathcal{L}\}$ of $Z^{n}$ by double-spheres so
that, for any $l\in \mathcal{L},$ the Lee weight $\left\vert l\right\vert $
of $l$ is even. This tiling $\mathcal{T}$ will be constructed by means of
Theorem \ref{B}. Hence it suffices to find a homomorphism $\phi
:Z^{n}\rightarrow G$, where $G$ is an Abelian group of order $4n,$ so that
the restriction of $\phi $ to a $V$ is a bijection. To guarantee that $%
\mathcal{T}$ is $q$-periodic, we choose $G$ so that $ord(g)|q$ for each $%
g\in G$, see Corollary \ref{C}, and we choose $\phi (e_{1})$ so that $%
ord(\phi (e_{1}))=q.$\bigskip

\noindent To have a group $G$ of order $4n$ with the property $ord(g)|q$ for
each $g\in G$ we set $G=Z_{t_{1}}\times Z_{t_{2}}\times ...\times Z_{t_{s}},$
where $t_{1}=q,$ and $t_{i}=2^{\gamma }p_{1}^{\gamma _{1}}...p_{k}^{\gamma
_{k}},$ where $\gamma ,\gamma _{j}\in \{0,1\}$ for all \thinspace $i>1,1\leq
j\leq k.$\bigskip

\noindent The homomorphism $\phi $ is defined by choosing $g_{i}:=\phi
(e_{i}).$ First we set $g_{1}=(1,0,...,0)$ and $g_{i}=(2i-1,0,...,0)$ for $%
2\leq i\leq \frac{q}{4}.$ It is easy to check that $\phi (\{\pm e_{i},\pm
e_{i}+e_{1};i=1,...,\frac{t}{4}\}=Z_{q}\times \{0\}\times ...\times \{0\}.$%
\bigskip

\noindent Similarly, for $b=(0,b_{2},...,b_{s})\in G$, $b\neq 0,$ with $%
ord(b)=2,$ we set $g_{i+j}=(2i-1,b_{2},...,b_{s}),i=1,...,\frac{q}{4},$ for
a suitable fixed $\dot{j}.$ Clearly, $\phi (\{\pm e_{i+j},\pm
e_{i+j}+e_{1};i=1,...,\frac{q}{4}\}=$ $Zq\times \{b_{2}\}\times ...\times
\{b_{n}\}.$\bigskip

\noindent Finally, if $b=(0,b_{2},...,b_{s})\in G$, $b\neq 0,$ with $%
ord(b)\neq 2,$ we associate $\frac{q}{2}$ dimensions with the elements $b$
and $b^{-1\,\ }$ by $g_{i+j}=(2i-1,b_{2},...,b_{s}),i=1,...,\frac{q}{2},$
where $\dot{j}$ is a suitable fixed number. Then $\phi (\{\pm e_{i+j},\pm
e_{i+j}+e_{1};i=1,...,\frac{q}{2}\}=$ $Z_{q}\times \{b_{2}\}\times ...\times
\{b_{n}\}\cup Z_{q}\times \{-b_{2}\}\times ...\times \{-b_{n}\}.$ It is easy
to see that in aggregate $\phi (V)=G.$ So the restriction of the mapping $%
\phi $ to $V$ is a bijection. Thus we proved that there exists a lattice
tiling $\mathcal{T}$ of $Z^{n}$ by double-spheres that is $q$-periodic.
\bigskip

\noindent Denote the lattice in the tiling $\mathcal{T}$ by $\mathcal{L}.$
We will prove that, for each $l\in \mathcal{L},$ the Lee weight $\left\vert
l\right\vert $ is even, which in turn implies, by Theorem \ref{X}, that $%
\mathcal{L}$ is a linear $DPL(n,d)$ code. To prove it we will construct a
vector basis $v_{i},i=1,...,n,$ of $\mathcal{L}$ whose each element is of
even weight. First we set $v_{1}=qe_{1}.$ As $q$ is even $\left\vert
v_{1}\right\vert =q$ is even, and $\phi (v_{1})=(q,0,...,0)=(0,...,0);$ i.e. 
$v_{1}\in \ker (\phi )=\mathcal{L}.$ Further, we set $%
v_{i}=(2i-1)e_{1}-e_{i} $ for $i=2,...,\frac{q}{4}.$ Also in this case we
get $\phi (v_{i})=(2i-1,0,..,0)-(2i-1,0,...,0)=(0,...,0)$ and $\left\vert
v_{i}\right\vert =2i-2.$ Now, for each $j=2,...,s$ there is $m_{j}$ so that $%
\phi (e_{m_{j}})=(1,0,...0,1,0,...0)$ with $j$-th coordinate equal to $1.$
We set $v_{m_{j}}=t_{j}e_{1}-t_{j}e_{m_{j}}.$ Again, as before, $\phi
(v_{m_{j}})=(0,...,0)$ and $\left\vert v_{m_{j}}\right\vert =2t_{j}$ is
even. Finally, if $\phi (e_{i})=(b_{1},b_{2},...b_{s}),$ where $i>\frac{q}{4}%
,i\neq m_{j},j=2,...,s,$ then we set $%
v_{i}=(b_{1}-b_{2}-...-b_{s})e_{1}+b_{2}e_{m_{2}}+...+b_{s}e_{m_{s}}-e_{i}.$
We get $\phi (v_{i})=(b_{1}-b_{2}-...-b_{s},0,...,0)+(b_{2},b_{2},0,...,0)+$ 
$(b_{3},0,b_{3},0,...,0)+...+$ $%
(b_{s},0,...,0,b_{s})-(b_{1},b_{2},...b_{s})=(0,...,0)$ and $\left\vert
v_{i}\right\vert =b_{1}+2(b_{2}+...+b_{s})+1$ which is even as $b_{1}$ is
always odd, see definition of $g_{i}=\phi (e_{i}).$ Thus we showed that, for
all $i=1,...,n,$ $v_{i}$ belongs to $\ker \phi =\mathcal{L}$ and $\left\vert
v_{i}\right\vert $ is even. It is not difficult to see that $%
v_{i},i=1,...,n, $ are independent vectors. Let $A$ be a matrix with $v_{i}$
being its $i$-th row. After permuting columns of $A$ so that the columns $%
m_{j},j=2,...,s,$ become columns indexed $\frac{q}{4}+1,...,\frac{q}{4}+s,$
we get a lower triangular matrix with its diagonal equal to $%
t_{1},-1,...,-1,-t_{2},-t_{3},...,-t_{s},-1,...,-1.$ Therefore $\left\vert
\det A\right\vert =t_{1}t_{2}...t_{s}=4n,~$which in turn implies that $v_{i}$%
's form a basis with the required properties. The proof is complete.\bigskip
\end{proof}

\noindent A cluster consisting of $2n+1$ unit cubes centered at points of
Lee sphere $S_{n,1}$ is called the $n$-cross. \ By the $n$-double-cross we
understand a cluster of $4n$ unit cubes centered at points of double-sphere $%
DS_{n,1},$ see Fig.~5 for $3$-double-cross. Szabo \cite{Sz} proved that there
is a non-regular lattice tiling of $%
\mathbb{R}
^{n}$ by $n$-crosses if and only if $2n+1$ is not a prime. The non-existence
of such lattice tiling for $2n+1$ being a prime follows from Redei's
decomposition theorem see \cite{Redei}. We show that, for some $n,$ there is
a non-regular lattice tiling of $%
\mathbb{R}
^{n}$ by $n$-double-crosses. At the first glance it seems that non-regular
tilings are not useful in coding theory. However, we will show that they are
instrumental for enumerating $DPL(n,4)$ codes.

%
\begin{figure}[!t]
\centering
\epsfig{figure=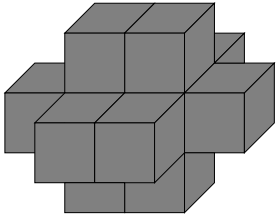, width=0.2\textwidth, angle=0}
\caption{$3$-double-cross}
\label{fig 5}
\end{figure}

\bigskip

\begin{theorem}
\label{F}Let $n>1$ be a natural number that is not a power of $2.$ Then
there exists a non-regular lattice tiling of $%
\mathbb{R}
^{n}$ by $n$-double-crosses.\bigskip
\end{theorem}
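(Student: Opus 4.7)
The plan is to apply Theorem \ref{E} with $V = DS_{n,1}$ (so $|V| = 4n$) and $W = V \cup (V + \frac{1}{2} e_1)$, reducing the task to producing an Abelian group $G$ of order $8n$ and a homomorphism $\phi : \mathcal{M} \to G$ (with $\mathcal{M}$ generated by $\frac{1}{2} e_1, e_2, \ldots, e_n$) such that $\phi|_W$ is a bijection and $\phi(e_i)$ generates $G$ for some $i \geq 2$. The generator requirement forces $G$ to be cyclic, so $G = Z_{8n}$. Writing $n = 2^{\alpha} m$ with $m > 1$ odd (available because $n$ is not a power of $2$), the Chinese Remainder Theorem identifies $G$ with $Z_{2^{\alpha+3}} \times Z_m$, which I would use as the working model.

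Setting $\phi(\frac{1}{2} e_1) := (1, 0)$ and $\phi(e_i) =: g_i = (a_i, b_i)$ for $i \geq 2$, the image $\phi(W)$ splits into an axis-part $A = \{(k, 0) : k \in \{-2, -1, 0, 1, 2, 3, 4, 5\}\}$ in row $0$, and off-axis parts $B_i = \{(k + a_i, b_i),\, (k - a_i, -b_i) : k = 0, 1, 2, 3\}$, each contributing a $4$-interval to row $b_i$ and a reflected one to row $-b_i$ of $Z_{2^{\alpha+3}}$. I would partition the index set $\{2, \ldots, n\}$ as follows: $2^\alpha - 1$ indices to Case $2$ ($b_i = 0$) with $g_i = (4j + 2, 0)$ for $j = 1, \ldots, 2^\alpha - 1$, so that together with $A$ these tile row $0$ by $4$-intervals; and the remaining $2^\alpha(m-1)$ indices into $(m-1)/2$ Case $1$ batches of $2^{\alpha+1}$ each, one per pair $\{\ell, -\ell\} \subset Z_m \setminus \{0\}$. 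A generic batch takes $g_i = (4j, \ell)$ for $j = 0, \ldots, 2^{\alpha+1} - 1$, tiling rows $\pm\ell$ by aligned $4$-intervals, while for the distinguished pair $\{1, -1\}$ I would shift to $g_i = (4j + 1, 1)$ so that the element $(1, 1)$ appears among the $\phi(e_i)$; since $\mathrm{ord}((1,1)) = \mathrm{lcm}(2^{\alpha+3}, m) = 8n$, this $g_i$ is a generator of $G$.

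Verifying the partition $A \sqcup B_2 \sqcup \ldots \sqcup B_n = Z_{2^{\alpha+3}} \times Z_m$ is a row-by-row count (with $\alpha = 0$ degenerating to the case where $A$ alone fills row $0$ and Case $2$ is empty); the totals $|A| + \sum_i |B_i| = 8 + 8(n-1) = 8n = |G|$ are consistent. I expect the only real obstacle to be checking that shifting the $4$-intervals by one in rows $\pm 1$ (for the distinguished batch) remains compatible with the tiling on all other rows. Because rows in the product $Z_{2^{\alpha+3}} \times Z_m$ are decoupled, this reduces to verifying that the shifted $4$-intervals $\{4j+1, \ldots, 4j+4\}$ for $j = 0, \ldots, 2^{\alpha+1} - 1$ still partition $Z_{2^{\alpha+3}}$, together with their reflections $\{-4j-1, \ldots, -4j+2\}$ on row $-1$. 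Once this partition is established, Theorem \ref{E} immediately yields the desired non-regular lattice tiling of $\mathbb{R}^n$ by $n$-double-crosses.
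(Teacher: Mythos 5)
Your proposal is correct and follows essentially the same route as the paper: both reduce to Theorem \ref{E} with the cyclic group of order $8n$, send $\frac{1}{2}e_1$ to an element whose multiples fill the ``axis'' part of $W$, partition the remaining coordinate directions into batches whose images tile the group by length-$4$ intervals in each residue class, and arrange for some $\phi(e_i)$, $i\geq 2$, to be a generator. The only difference is cosmetic: you work in the CRT coordinates $Z_{2^{\alpha+3}}\times Z_m$ while the paper computes directly in $Z_{8n}$ using multiples of $2k+1$.
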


\begin{proof}
As above we use the double sphere $V=\{\pm e_{i},\pm
e_{i}+e_{1};i=1,...,n\}. $ By Theorem \ref{E} it suffices to find an Abelian
group $G$ of order $8n$ and a homomorphism $\phi :\mathcal{M}\rightarrow G$
so that the restriction of $\phi $ to $W=V\cup \{V+\frac{1}{2}e_{1}\}$ is a
bijection and $\phi (e_{i})$ is a generator of $G$ for some $i>1.$ Since $\
n $ is not a power of $2,$ then $n$ can be written in the form $%
n=2^{t}(2k+1),k>0.$ As $G$ we choose the cyclic group $Z_{8n}.$ We set

$\phi (\frac{1}{2}e_{1})=2k+1,$ and, for $t>0,$ $\phi
(e_{i})=(4i-2)(2k+1),i=2,...,2^{t}.$ This way $\phi
(\bigcup\limits_{i-1}^{2^{t}}\{\pm e_{i},\pm e_{i}+e_{1}\}\cup \{\{\pm
e_{i},\pm e_{i}+e_{1}\}+\{\frac{1}{2}e_{1}\}\})$ $=\{s;0\leq s<8n,$ $s$ $%
\equiv 0(mod)(2k+1)\}.$\bigskip

\noindent For $j=1,...,k,$ we set $\phi (e_{i+c_{j}})=j+4(i-1)(2k+1)$ for $%
i=1,...,2^{t+1},$ where $c_{j}=2^{t}+(j-1)2^{t+1}.$ For fixed $j$ we have $%
\phi (\bigcup \{\pm e_{i},\pm e_{i}+e_{1}\}\cup \{\{\pm e_{i},\pm
e_{i}+e_{1}\}+\{\frac{1}{2}e_{1}\}\}),$ where $2^{t}+(j-1)2^{t+1}+1\leq
i\leq 2^{t}+j2^{t+1})$ $\ =\{s;0\leq s<8n,$ $s\equiv \pm j(mod)$ $(2k+1)\}.$
Thus, in aggregate, $\phi (V)=G.$ To finish the proof it is sufficient to
notice that $\phi (e_{1+2^{t}})=1$, a generator of $G.$ The proof is
complete.\bigskip
\end{proof}

\noindent As mentioned in the introduction the double-sphere $DS_{3.1}$ can
be seen as the union of two Lee spheres $S_{3,1}$ with centers at distance $%
1.$ Therefore it is somewhat surprising that there exists a non-regular
tiling $\mathcal{T}$ of $%
\mathbb{R}
^{3}$ by $3$-double-cross although there is no non-regular tiling of $%
\mathbb{R}
^{3}$ by $3$-crosses. A detailed description of the tiling $\mathcal{T}$
will be provided in the proof of the following theorem that enumerates $%
DPL(n,4)$ codes for all $n\geq 2.$ We recall that two tilings $\mathcal{T=\{}%
V+l;l\in \mathcal{L}\}$ and $\mathcal{T}^{\prime }=\{V+l;l\in \mathcal{L}%
^{\prime }\}$ are called congruent (two codes $\mathcal{L}$ and $\mathcal{L}%
^{\prime }$ are called isomorphic) if there exists a linear
distance-preserving bijection $Z^{n}\rightarrow Z^{n}$ that maps $\mathcal{L}
$ on $\mathcal{L}^{\prime }.$\bigskip

\begin{theorem}
For each $n\geq 2,$ there are $2^{\aleph _{0}}$ non-isomorphic $DLP(n,4)$
codes.\bigskip
\end{theorem}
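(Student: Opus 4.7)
The plan is to invoke Theorem~\ref{X}: since every tiling of $Z^{n}$ by double-spheres $DS_{n,1}$ yields a $DPL(n,4)$ code (pick from each center pair the element at even Lee distance from a fixed origin), I only need to construct $2^{\aleph_{0}}$ pairwise non-congruent tilings of $Z^{n}$ by $DS_{n,1}$ and then pass to isomorphism classes of codes by a counting argument.

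For $n$ a power of $2$ (in particular $n=2$) I would cite Etzion's construction in \cite{E}. The remaining case is $n\geq 3$ not a power of $2$, where Theorem~\ref{F} supplies a non-regular lattice tiling $\mathcal{T}^{\ast }$ of $\mathbb{R}^{n}$ by $n$-double-crosses. The geometric content I intend to use is the existence in $\mathcal{T}^{\ast }$ of pairs of tiles that share an $(n-1)$-dimensional contact patch without being face-to-face, i.e., sliding contacts that no regular tiling admits.

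Given such a tiling, I would locate a bi-infinite family of cutting surfaces made up of these non-face-to-face patches, partitioning $\mathcal{T}^{\ast }$ into ``slabs'' each consisting of whole tiles, and then argue that the contents of each slab can be independently translated by any integer vector transverse to $e_{1}$ without destroying the tiling property. Parametrizing the transverse shifts by a bi-infinite sequence $s\in \{0,1\}^{Z}$ would then produce $2^{\aleph _{0}}$ distinct tilings $\mathcal{T}_{s}$ of $\mathbb{R}^{n}$. In each $\mathcal{T}_{s}$, a tile's $4n$ cube centers are either all integer-first-coordinate or all half-integer-first-coordinate (as can be checked from $V=\{\pm e_{i},\pm e_{i}+e_{1}\}$), so the subfamily of integer-first-coordinate tiles is well defined and constitutes a tiling of $Z^{n}$ by $DS_{n,1}$; Theorem~\ref{X} then converts this into a $DPL(n,4)$ code $\mathcal{C}_{s}$. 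Distinct sequences yield non-congruent $Z^{n}$-tilings, because the shift $v_{k}$ of slab $k$ is recoverable from the transverse positions of the integer tiles inside that slab.

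For the isomorphism count, the group of linear distance-preserving bijections of $Z^{n}$ is the finite group $\{\pm 1\}^{n}\rtimes S_{n}$, so each isomorphism class of codes has cardinality at most $\aleph _{0}$ (even after tacitly permitting composition with translations). Therefore $2^{\aleph _{0}}$ pairwise non-congruent codes will collapse into $2^{\aleph _{0}}$ non-isomorphic classes. The hard part of the plan will be the slab-shift construction itself: exhibiting the cutting surfaces explicitly, verifying that transverse integer shifts genuinely preserve the tiling, and checking that distinct shift sequences cannot be identified by the countable symmetry group. This is where the non-regularity supplied by Theorem~\ref{F} is indispensable, since a face-to-face tiling would be rigid across every slice and no such continuum of modifications could exist.
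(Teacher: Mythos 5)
Your high-level skeleton matches the paper's: convert tilings of $Z^{n}$ by $DS_{n,1}$ into $DPL(n,4)$ codes via Theorem \ref{X}, quote Etzion for $n$ a power of $2$, start from the non-regular lattice tiling of Theorem \ref{F} otherwise, produce a continuum of modified tilings, and divide by the at most countable symmetry group at the end. That first and last step are fine. But the geometric core --- the part you yourself defer as ``the hard part'' --- is not only unexecuted, it is set up in a way that cannot work. The sliding mechanism is misidentified: in the tiling of Theorem \ref{F} the tile centers lie in the lattice generated by $\frac{1}{2}e_{1},e_{2},\dots ,e_{n}$, and the correct decomposition is not into slabs translated by integer vectors transverse to $e_{1}$, but into the two classes $\mathcal{K}_{1},\mathcal{K}_{2}$ of tiles according to whether the first coordinate of the translation vector is an integer or a half-integer. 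These two sets are \emph{prisms along the $x_{1}$-axis}; $\mathcal{K}_{2}$ has infinitely many connected components $B_{m}$, $m\in Z$ (this requires proof, which the paper supplies by computing an explicit basis of the lattice), and it is these components that may be translated independently \emph{along} $e_{1}$.

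The second, more fatal, problem is your claim that in each modified tiling ``the subfamily of integer-first-coordinate tiles \dots constitutes a tiling of $Z^{n}$.'' An integer translation transverse to $e_{1}$ leaves every half-integer-centered tile half-integer-centered, so your $\mathcal{T}_{s}$ still contains such tiles; and the integer points lying inside the prism $\mathcal{K}_{2}$ sit on the boundaries of half-integer cubes and belong to no integer-centered tile. Hence the integer subfamily covers only a proper subset of $Z^{n}$ and yields no code at all. The paper's remedy is precisely to shift each component $B_{m}$ by $+\frac{1}{2}e_{1}$ or $-\frac{1}{2}e_{1}$ (the sign read off from the binary digits of an irrational number), which converts \emph{every} tile into an integer-centered one; the resulting $Z$-tilings induce tilings of $Z^{n}$ by double-spheres, Theorem \ref{X} turns each into a $DPL(n,4)$ code, and the $2^{\aleph _{0}}$ sign patterns give $2^{\aleph _{0}}$ distinct codes. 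To repair your argument you would need to replace the slab-with-transverse-shift picture by this prism-component picture with longitudinal half-shifts.
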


\begin{proof}
First of all we show that the total number of $DPL(n,4)$ codes is at most $%
2^{\aleph _{0}}.$ Each $DPL(n,4)$ code is a subset of $Z^{n}$ of cardinality 
$\aleph _{0}$. This in turn implies that the total number of $DLP(n,4)$
codes in $Z^{n}$ is at most the number of ways how to choose $\aleph _{0}$
points in $Z^{n}$; thus it is at most $(\aleph _{0})^{\aleph _{0}}=2^{\aleph
_{0}}.$ Therefore, there are at most $2^{\aleph _{0}}$ $DPL(n,4)$ codes.\ To
prove the statement we will construct $2^{\aleph _{0}}$ non-isomorphic $%
DPL(n,4)$ codes. Etzion \cite{E} proved that there are $2^{\aleph _{0}}$
non-isomorphic $DPL(n,4)$ codes for $n$ being a power of $2.$ So we are left
with the case when $n$ is not a power of $2.$\bigskip

\noindent Let $n=2^{t}(2k+1).\,\ $By Theorem \ref{F} there exists a
non-regular lattice tiling $\mathcal{T}$ $=\{W+l;l\in \mathcal{L\}}$ of $%
\mathbb{R}
^{n}$ by double-crosses, where $W$ is the $n$-double-cross comprising unit
cubes centered at words in $V=\{\pm e_{i},\pm e_{i}+e_{1};i=1,...,n\}$. For
the purpose of this proof it is convenient to know a basis of the lattice $%
\mathcal{L}.$ Let $j=2^{t}+1.$ Consider vectors $v_{i},i=1,...,n,$ given by $%
v_{1}=-\frac{1}{2}e_{1}+(2k+1)e_{j},$ $v_{j}=8ne_{j},$ and $v_{i}=\phi
(e_{i})e_{j}-e_{i},$ for $2\leq i\leq n,i\neq j$. It is easy to check that $%
\phi (v_{i})=0$ as $\phi (e_{j})=1$ and $\phi (\frac{1}{2}e_{1})=2k+1,$ see
the proof of Theorem \ref{F}, i.e., $v_{i}\in \mathcal{L}$ for all $%
i=1,...,n.$ Let $\,A$ be a matrix whose $i$-th row is the vector $v_{i}.$
Transpose the first and the $j$-th column and the first and the $j$-th row
of $A.$ An element of the resulting matrix $A^{\prime }$ is non-zero if and
only if it is in the first column or it is a diagonal element of $A^{\prime
} $. Thus $v_{i}$'s are independent vectors. Moreover, $\left\vert \det
(A^{\prime })\right\vert $ equals the product of its diagonal elements; thus 
$\left\vert \det (A)\right\vert =8n(-\frac{1}{2})(-1)^{n-2}=4n.$ Therefore $%
v_{i},i=1,...,n,$ constitute a basis of $\mathcal{L}.$\bigskip

\noindent Consider a relation $R$ on $\mathcal{L}$ given by $mRn$ if for
their first coordinates $m_{1},n_{1}$ it is $m_{1}-n_{1}\in Z.$ It is easy
to check that $R$ is an equivalence relation on $\mathcal{L}$ with two
equivalence classes $R_{1}=\{l;l\in L,l_{1}\in Z\}$ and $R_{2}=\{l;l\in 
\mathcal{L},l_{1}=k+0.5,k\in Z\}$ where $l_{1}$ is the first coordinate of $%
l.$ For $i=1,2,$ set $K_{i}=\bigcup\limits_{l\in \mathcal{R}_{i}}\{W+l\}.$
Clearly, $\mathcal{K}_{1}$ and $\mathcal{K}_{2}$ form a partition of $%
\mathbb{R}
^{n}.$ It is well known, see e.g. \cite{S}, \cite{Sz}, that $R_{1}$ is a
sublattice of $\mathcal{L}$ and that $R_{2}$ is a coset of $\mathcal{L}%
/R_{1};$ thus $R_{2}=R_{1}+v_{1}$ and also $\mathcal{K}_{2}=\mathcal{K}%
_{1}+v_{1}.$ Moreover, both $\mathcal{K}_{1}$ and $\mathcal{K}_{2}$
constitute prisms along $x_{1}$-axis. That is, if a point $x\in \mathcal{K}%
_{i},i\in \{1,2\}$ then the whole line parallel to $x_{1}$-axis passing
through $x$ belongs to $\mathcal{K}_{i}.$ As a very important consequence we
get that tiles in $\mathcal{K}_{2}$ can be shifted along $x_{1}$-axis
independently of tiles in $\mathcal{K}_{1}.$ For example, shifting $\mathcal{%
K}_{2}$ by $0.5e_{1}$ produces a tiling $\mathcal{S}$ of $%
\mathbb{R}
^{n}$ by $n$-double-crosses with all their cubes centered at integer points.
It is easy to show that $\mathcal{S}$ is not a lattice tiling but it is a
periodic tiling.\bigskip

\noindent Now we show that in fact both $\mathcal{K}_{1}$ and $\mathcal{K}%
_{2}$ consist of infinitely many connectivity (in the topological sense)
components. For the reader's convenience we start with $n=3.$ Then the basis
of $\mathcal{L}$ comprises vectors $v_{1}=-\frac{1}{2}%
e_{1}+3e_{2},v_{2}=24e_{2},$ and $v_{3}=13e_{2}-e_{3}.$ Note that cubes
centered at $\{O,\pm e_{2}\}$ belong to $\mathcal{K}_{1}$ as $V+O=V\in 
\mathcal{K}_{1}.$ Further, as $\mathcal{K}_{1}$ is a prism along $x_{1}$%
-axis, and $2tv_{1}\in R_{1}$ for all $t\in Z,$ we have that cubes centered
at $\{O,\pm e_{2}\}+6te_{2}\in \mathcal{K}_{1}$ for all $t\in Z.$ Similarly,
all cubes centered at $\{O,\pm e_{2}\}+e_{2}-e_{3}\in \mathcal{K}_{1}$ as $%
4v_{1}+v_{2}=-2e_{1}+e_{2}-e_{3}\in R_{1}.$ Thus all cubes centered at $%
\{O,\pm e_{2}\}+6te_{2}+s(e_{2}-e_{3})\in \mathcal{K}_{1}$ for all $s,t\in
Z; $ see Fig.~6, where the intersection of the $x_{2}x_{3}$-plane with $%
\mathcal{K}_{1}$ is shaded. The rest of the plane belongs to the
intersection with $\mathcal{K}_{2}$ as $\mathcal{K}_{2}=\mathcal{K}_{1}+v_{1}
$ and both $\mathcal{K}_{i}$s are prisms along $x_{1}$-axis. Thus we proved
that both $\mathcal{K}_{1}$ and $\mathcal{K}_{2}$ comprise infinitely many
components. In particular, if $B_{m},m\in Z,$ is the collection of all
components in $\mathcal{K}_{2}$ then the point $3(2m+1)e_{2}\in B_{m}$. The
proof for $n>3,$ $n$ not a power of $2$, would be very similar although
technically more involved. In this case the points $(2k+1)(2m+1)e_{2}$
belong to distinct connectivity components $B_{m}$ of $\mathcal{K}_{2},m\in
Z;$ we recall that $n=2^{t}(2k+1).$\bigskip %
\begin{figure}[t]
\centering
\epsfig{figure=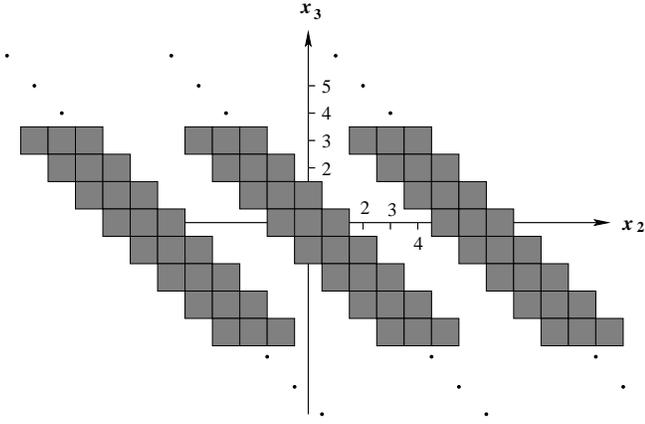, width=0.475\textwidth, angle=0}
\caption{The intersection of $x_{2}x_{3}$-plane with $K_{1}$}
\label{fig 6}
\end{figure}

\noindent The connectivity components of $\mathcal{K}_{2}$ can be shifted
along $x_{1}$-axis independently of each other and on $\mathcal{K}_{1}.$
This property of $\mathcal{K}_{2}$ will be used to construct $2^{\mathcal{%
\aleph }_{0}}$ non-isomorphic $DPL(n,4)$ codes. Let $r\in (0,1)$ be an
irrational number, $r=0.r_{1}...r_{k}...$ be its binary representation. A
tiling $\mathcal{T}_{r}$ will be obtained from the tiling $\mathcal{T}$ by
translating connectivity components $B_{m}$ of $\mathcal{K}_{2}$ as follows:
If $m\leq 0$ or $r_{m}=0$ then the component $B_{m}$ will be shifted by $-%
\frac{1}{2}e_{1},$ otherwise it will be shifted by $\frac{1}{2}e_{1}.$
Clearly all double-crosses in $\mathcal{T}_{r}$ are centered at integer
points. In other words, $\mathcal{T}_{r}=\{W+l;l\in \mathcal{L}^{\prime }\}$
is a tiling such that all coordinates of any $l\in \mathcal{L}^{\prime }$
are integers. Such tiling will be called a $Z$-tiling. Obviously, for any
two irrational numbers $r$ and $s,$ the corresponding tilings $\mathcal{T}%
_{r}$ and $\mathcal{T}_{s}$ are distinct; we recall that two tilings $%
\mathcal{\{}W\mathcal{+}l;l\in \mathcal{L}\}$ and $\mathcal{\{}W+l;l\in 
\mathcal{L}^{\prime }\}$ are called distinct if $\mathcal{L}$ and $\mathcal{L%
}^{\prime }$ are distinct. So we have produced a family $\mathcal{S=\{T}%
_{r}; $ $r$ is an irrational number in $(0,1)\}$ of $2^{\mathcal{\aleph }%
_{0}}$ distinct $Z$-tilings of $%
\mathbb{R}
^{n}$ by double-crosses. Each $Z$-tiling of $%
\mathbb{R}
^{n}$ by double-crosses induces a tiling of $Z^{n}$ by double-spheres. So
the family $\mathcal{S}$ induces a family $\mathcal{S}^{\prime }$ of $%
2^{\aleph _{0}}$ distinct tilings of $Z^{n}$ by double-spheres $V=\{\pm
e_{i},\pm e_{i}+e_{1};i=1,...,n\}$. Clearly $O\in \mathcal{L}$ for each
tiling $\mathcal{T=\{}V+l;l\mathcal{\in L}\},$ that is $V\in \mathcal{T}$
for each tiling $\mathcal{T}$ in $\mathcal{S}$. To each tiling $\mathcal{T}%
=\{V+l;l\in \mathcal{L}\}$ in $\mathcal{S}^{\prime }$ we assign a set $%
\mathcal{T}_{N}=:\{y;y$ is a center of $V+l,l\in \mathcal{L},$ and $%
\left\vert y\right\vert $ is even$\}=\{l;\left\vert l\right\vert $ is even$%
,l\in \mathcal{L}\}\cup \{l+e_{1};\left\vert l\right\vert $ is odd$,l\in 
\mathcal{L}\},$ where $\left\vert x\right\vert =\rho _{L}(x,O)$ is the Lee
weight of $x\in Z^{n}$. It was showed in the proof of Theorem \ref{X}, that $%
\mathcal{T}_{N}$ is a $DPL(n,4)$ code. It is easy to see that if $\mathcal{T}
$ and $\mathcal{T}^{\prime }$ are distinct tilings in $\mathcal{S}^{\prime }$
then the induced codes $\mathcal{T}_{N}$ and $\mathcal{T}_{N}^{\prime }$ are
distinct as well. Thus, we have constructed a family $\mathcal{S}^{\prime
\prime }$ of $2^{\aleph _{0}}$ distinct $DPL(n,4)$ codes. As each code in $%
\mathcal{S}^{\prime \prime }$ is isomorphic to at most $\aleph _{0}$ codes
in $\mathcal{S}^{\prime \prime }$(we recall that two codes $\mathcal{T}$ and 
$\mathcal{T}^{\prime }$are isomorphic if there exists a linear distance
preserving bijection mapping $\mathcal{T}$ onto $\mathcal{T}^{\prime }$) $%
\mathcal{S}^{\prime \prime }$ has to contain a set of $2^{\aleph _{0}}$
non-isomorphic $DPL(n,4)$ codes. The proof is complete.\bigskip
\end{proof}

\noindent As an easy consequence we get:\bigskip

\begin{corollary}
For each $n\geq 2,$ the number of non-periodic $DPL(n,4)$ codes is $%
2^{\aleph _{0}}.$\bigskip
\end{corollary}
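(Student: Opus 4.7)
The plan is to extract this corollary as an immediate cardinality count built on top of the preceding theorem. First I would establish an upper bound: the total number of $DPL(n,4)$ codes in $Z^{n}$ is at most $2^{\aleph_{0}}$, exactly as argued at the opening of the previous theorem's proof (each code is a subset of $Z^{n}$, giving at most $(\aleph_{0})^{\aleph_{0}}=2^{\aleph_{0}}$ possibilities). In particular the set of non-periodic $DPL(n,4)$ codes has cardinality at most $2^{\aleph_{0}}$.

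Next I would bound the number of periodic codes from above by $\aleph_{0}$. The key observation is that a $p$-periodic subset of $Z^{n}$ is uniquely determined by its intersection with the finite box $\{0,1,\dots,p-1\}^{n}$, so there are at most $2^{p^{n}}$ codes of period $p$. Since the set of admissible periods is contained in the countable set $\mathbb{N}$, the total count of periodic $DPL(n,4)$ codes is at most $\sum_{p\ge 1} 2^{p^{n}} = \aleph_{0}$.

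Combining these bounds finishes the proof. The previous theorem produces $2^{\aleph_{0}}$ pairwise non-isomorphic (hence, a fortiori, pairwise distinct) $DPL(n,4)$ codes; removing the at most $\aleph_{0}$ periodic ones still leaves $2^{\aleph_{0}}-\aleph_{0}=2^{\aleph_{0}}$ non-periodic codes, matching the upper bound from the first step.

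There is essentially no obstacle here: the corollary is a cardinal-arithmetic consequence of the preceding theorem plus the trivial observation that periodicity is a strong finiteness condition. The only point requiring a line of care is the bookkeeping that \emph{non-isomorphic} in the theorem already implies \emph{distinct}, so that the $2^{\aleph_{0}}$ codes produced there genuinely live in $Z^{n}$ as pairwise different subsets before we even consider periodicity.
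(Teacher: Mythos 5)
Your proposal is correct and follows essentially the same route as the paper: bound the periodic codes by countably many (finitely many for each period $p$, summed over countably many periods), invoke the $2^{\aleph_{0}}$ distinct codes from the preceding theorem, and conclude by cardinal arithmetic that $2^{\aleph_{0}}$ of them must be non-periodic. The only difference is that you make the finiteness bound for each fixed period explicit ($2^{p^{n}}$ via restriction to the box $\{0,\dots,p-1\}^{n}$), where the paper simply asserts it.
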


\begin{proof}
It is easy to see that for each $p\in Z$ there are at most finitely many $p$%
-periodic $DPL(n,4)$ codes. Therefore, in aggregate, there are at most $%
\aleph _{0}$ periodic $DPL(n,4)$ codes. As there are in total $2^{\aleph
_{0}}$ $DPL(n,4)$ codes, $2^{\aleph _{0}}$ of them have to be
non-periodic.\bigskip
\end{proof}

\noindent We note that Szabo \cite{Sz} constructed non-regular lattice
tilings of $%
\mathbb{R}
^{n}$ by $n$-crosses if $2n+1$ is a prime. The non-existence of such lattice
tiling for $2n+1$ being a prime follows from a result of Redei \cite{Redei}
on factorization of Abelian groups. We have constructed a lattice
non-regular tiling of $R^{n}$ by $n$-double-cross for all $n$ that are not a
power of $2$. We believe that there are no non-regular tilings of $R^{n}$ by 
$n$-double-cross but at the moment we are not able to prove it.

\section{Decoding algorithm for linear perfect Lee codes}

\noindent In this section we design an algorithm for decoding a linear $%
DPL(n,d)$ code. The computational complexity of this algorithm is linear in $%
n$ for $d=3,4,$ and its complexity is $O(\log d)$ for $DPL(2,d)$ codes. We
recall that $DPL(n,d)$ codes are believed not to exists for $n\geq 3$ and $%
d>4.$ We stress that this algorithm can be applied to all perfect Lee codes
as diameter perfect Lee codes include also perfect error-correcting codes.
The efficiency of the algorithm is achieved thanks to the representation of
linear codes by means of a tiling constructed through a group homomorphism.
We design the algorithm for a linear $DPL(n,d)$ code. In case of a linear $%
DPL(n,d,q)$ code it suffices to realize that, by Theorem \ref{A}, the code
is a restriction of a $DPL(n,d)$ code and the corresponding algorithm would
be nearly identical.\bigskip

\noindent Let $\mathcal{L}$ be a linear $DPL(n,d)$ code. Theorem \ref{B}
implies that there is a homomorphism $\phi :Z^{n}\rightarrow (G,\circ ),$
where $G=Z^{n}/\mathcal{L}$ is an Abelian group of order $\left\vert
W\right\vert ,$ $W$ is the anticode of diameter $d-1$ of maximum size, and
that the restriction of $\phi ~$to $W$ is a bijection. Denote by $%
f:G\rightarrow W$ the function inverse to the restriction of $\phi $ to $W.$%
\bigskip

\noindent Assume that we receive a word $a\in Z^{n}.$ First we find the
element $g=\phi (a)$ of the group $G.$ $\ $If $g=e,$ the neutral element in $%
G,$ we are done as $a$ is a codeword in $\mathcal{L}$. In general the
corresponding codeword $l\in \mathcal{L}$ is given by $l=a-f(g)=a-f(\phi
(a)).$ Indeed, $\mathcal{T=\{}W+l;l\in \mathcal{L\}}$ is a tiling of $Z^{n}$
by $W.$ Thus, each element $a\in Z^{n}$ can be written in a unique way as $%
a=l+w,$ where $l\in \mathcal{L}$ is the sought codeword and $w\in W.$ Hence, 
$l=a-w=a-f(g)=a-f(\phi (a)).$

\bigskip

\noindent Hence the complexity of the algorithm is in fact the number of
operations needed for calculation of $f(\phi (a)).$ We note that the number
of operations in the calculation of $\phi (a)$ is linear in $n$ since $\phi
(a)=\phi ((a_{1},...,a_{n}))=\phi (e_{1})^{a_{1}}\circ ...\circ \phi
(e_{n})^{a_{n}}$, where the values $\phi (e_{i}),i=1,...,n,$ will be stored.
Now we bound from above the number of operations needed for the calculation
of the values of the function $f$ for any element $g$ in the group $G.$ The
values of the function $f$ need to be calculated only once and therefore
time needed for obtaining these values is not included in the complexity of
the decoding algorithm. The values of $f$ will be stored in a tabular way
and we will evaluate time needed to find the value of $f(g)$ in the table
for given $g\in G$. The size of the table is the order of the group $G,$
which equals $\left\vert W\right\vert .$ It is known, see \cite{E}, that the
volume of the double-sphere is 
\begin{equation}
\left\vert DS_{n,r}\right\vert =\sum\limits_{i=0}^{\min \{n-1,r\}}2^{i+1}{%
\binom{n-1}{i}}{\binom{r+1}{i+1}}
\end{equation}%
The volume of the sphere $S_{n,r}$ is smaller than the volume of $DS_{n,r}$
therefore we find the bound only for the double-sphere $DS_{n,r}$. However,
we can find the required value of $f$ in the table in an efficient way. As $%
G $ is an Abelian group, $G$ can be written as a direct product of cyclic
groups $Z_{q_{1}}\times ...\times Z_{q_{s}}$. We store values of $f$ in the
lexicographic order; that is, first comes value of $f$ at $(0,0,...,0,0),$
then value of $f$ \ at $(0,0,...,0,1),$ then at $(0,0,...,2),$ etc. The rank 
$r$ of an element $a=(a_{1},...,a_{s})$ in this order is 
\begin{equation}
r(a)=1+a_{s}+\sum\limits_{i=1}^{s-1}a_{i}\prod\limits_{j=i+1}^{s}q_{j}
\end{equation}

\noindent Hence, the rank $r(a)$ can be calculated in time linear with $s$
as from (2) it follows 
\begin{equation}
\begin{split}
r(a)=& 1+a_{s}+ \\
& q_{s}(a_{s-1}+q_{s-1}(a_{s-2}+q_{s-2}(a_{s-3}+...+ \\
& q_{3}(a_{2}+q_{2}(a_{1}))...))) \\
&
\end{split}%
\end{equation}%
%
%

\noindent The maximum value of $s$ is achieved when $G=Z_{2}\times
Z_{2}\times ...\times Z_{2}.$ Thus the maximum value of $s=\log \left\vert
W\right\vert =\log \left\vert DS_{n,r}\right\vert $. For $n=2$ we get $%
\left\vert DS_{2,r}\right\vert =2(r+1)^{2},$ thus $s=O(\log d)$ where $%
d=2r+2.$ For $d=4$ it is $\left\vert DS_{n,1}\right\vert =4n,$ thus $s=\log
n.$ In aggregate, the complexity of the algorithm for $DPL(n,d),d=3,4,$
codes is $\Theta (n),$ and its complexity is $O(\log d)$ for $DPL(2,d)$
codes.

\end{document}